\documentclass[a4paper,12pt]{article}
\usepackage[14pt]{extsizes} 
\usepackage[english]{babel}
\usepackage[mathscr]{eucal}
\usepackage{amsmath,amsfonts,amssymb,amsthm}
\usepackage{times}
\usepackage{hyperref}

\numberwithin{equation}{section} 

\newtheorem{thm}{Theorem}[section]

 \newtheorem{prop}[thm]{Proposition}
 
 \newtheorem{definition}[thm]{Definition}
 
\renewcommand{\tilde}{\widetilde}

\renewcommand{\simeq}{\cong}
\newcommand{\bref}[1]{\textbf{\ref{#1}}}

\newcommand{\p}[1]{|#1|}
\newcommand{\gh}[1]{\mathrm{gh}(#1)}

\newcommand{\dx}{\mathrm{d}_X}

\renewcommand{\d}{\partial}

\newcommand{\tensor}{\otimes}

\renewcommand{\geq}{\,{\geqslant}\,}
\renewcommand{\leq}{\,{\leqslant}\,}

\newcommand{\binner}[2]{%
  {\langle}\kern-4.15pt{\langle}#1{,}\,#2{\rangle}\kern-4.15pt{\rangle}}
\newcommand{\commut}[2]{[#1{,}\,#2]}

\newcommand{\pb}[2]{\left\{{}#1{},{}#2{}\right\}}

\newcommand{\half}{\mathchoice{%
    \ffrac{1}{2}}{\frac{1}{2}}{\frac{1}{2}}{\frac{1}{2}}}

\newcommand{\ffrac}[2]{\raisebox{.5pt}%
  {\footnotesize$\displaystyle\frac{#1}{#2}$}\kern1pt}

\newcommand{\dl}[1]{\mathchoice{\ffrac{\d}{\d #1}}{\frac{\d}{\d #1}}{\ffrac{\d}{\d #1}}{\ffrac{\d}{\d #1}}}

\newcommand{\st}[2]{{\overset{#1}{#2}}}

\newcommand{\ddl}[2]{\ffrac{\d #1}{\d #2}}

\newcommand{\Liealg}{\mathfrak} 
\newcommand{\algg}{\Liealg{g}}

\newcommand{\cC}{\mathcal{C}}

\newcommand{\fR}{\mathbb{R}}

\newcommand{\fZ}{\mathbb{Z}}

 \def\cI{\mathcal{I}}
 
\def\cK{\mathcal{K}}
 
\def\tr{{\rm Tr}}

\newcommand\blfootnote[1]{%
  \begingroup
  \renewcommand\thefootnote{}\footnote{#1}%
  \addtocounter{footnote}{-1}%
  \endgroup
}

\usepackage[left=3cm,right=2cm,
    top=3cm,bottom=3cm,bindingoffset=0cm]{geometry}

\newcommand{\comm}[1]{}
\newcommand{\ev}{\mathrm{ev}}
\newcommand{\ferm}[1]{\epsilon(#1)}
\usepackage{authblk}
\usepackage{mathtools}
\usepackage[dvipsnames]{xcolor}
\usepackage{comment}

\title{Topological deformations of local gauge theories: a presymplectic \\ BV-AKSZ approach}

\author[1]{Ivan Dneprov}

\author[2]{Alexey Donchenko}

\author[1,2,$\dagger$]{Maxim Grigoriev}

\affil[1]{\textsl{Service de Physique de l'Univers, Champs et Gravitation, \protect\\ Universit\'e de Mons, 20 place du Parc, 7000 Mons, 
Belgium \vspace{5pt}}}

\affil[2]{\textsl{ Institute for Theoretical and Mathematical Physics,\protect\\
  Lomonosov Moscow State University, 119991 Moscow, Russia  \vspace{5pt}}}

\date{}

\begin{document}

\maketitle
\begin{abstract}

The same local degrees of freedom can be described by systems with different gauge structure and field content. Standard examples are the duality between the scalar field and the 2-form gauge field in four dimensions, and unimodular gravity, where the cosmological constant becomes a global degree of freedom. In a precise sense, the spaces of gauge-inequivalent formal solutions of such systems differ by a finite-dimensional sector. Moreover, because their gauge sectors are different, their consistent interactions and their sensitivity to the global structure of spacetime may differ substantially. We show that such systems are naturally related by what we call topological deformations: gauging a symmetry of generic degree couples the system to a BF-type sector without introducing new local degrees of freedom. The construction is carried out systematically within the presymplectic BV-AKSZ approach, where it amounts to a modification of the underlying presymplectic gauge PDE. Gauging degree-0 symmetries identifies solutions related by global symmetries, which reproduces the scalar–2-form and principal chiral model–Freedman–Townsend dualities. Gauging degree-1 symmetries, i.e.\ deformations, gives rise to new global degrees of freedom, such as the cosmological constant in unimodular gravity.

\end{abstract}

\blfootnote{${}^{\dagger}$ Also at Lebedev Physical Institute, Moscow, Russia}
\thispagestyle{empty}
\newpage

\setcounter{tocdepth}{2}

\tableofcontents

\section{Introduction}

We often meet pairs of field theories which are close to being equivalent in the sense that they have the same local degrees of freedom but their gauge structure is different and the moduli spaces of solutions may differ by a finite-dimensional factor. There are at least two known types of such pairs, which we refer to as pairs of deformation type and pairs of duality type. In pairs of deformation type one theory has a global conserved charge while the other one is merely the first one with a fixed value of this charge. An elegant example is provided by unimodular gravity, where the cosmological constant is promoted to a dynamical field whose equations of motion imply that it is constant and hence is a global conserved charge. The system obtained by setting this ``cosmological field'' to a particular value is locally equivalent to the usual general relativity. Pairs of duality type are the so-called dual formulations of the same local degrees of freedom. The standard example is the duality between the massless scalar field and the $2$-form gauge field in $4$-dimensional spacetimes.

In both of the above situations such a pair can be constructed by applying to a given gauge system a certain systematic procedure which in this work is referred to as gauging, though it is often called the flat gauging. In a very precise sense this procedure makes a generic-degree global symmetry into a gauge symmetry without adding local degrees of freedom.

Pairs of deformation type can be understood as resulting from the gauging of deformations which, in turn, may be regarded as ghost degree-$1$ symmetries (also known as form degree $-1$ symmetries). In this case the space of inequivalent solutions becomes larger because the corresponding ghost field(s) give a new global degree of freedom. In contrast, pairs of duality type result from gauging usual (i.e.\ degree-zero) symmetries. This reduces the space of gauge-inequivalent solutions because the configurations which were related by the global symmetry transformation are now related by the gauge one and hence are to be considered equivalent. Of course, if global geometry effects are not disregarded the relation can be much more subtle.

To be a little bit more precise let us give some more details on what the gauging of a deformation amounts to, when applied to a Lagrangian field theory without gauge symmetry. Consider a local field theory described by an action functional of the form
\begin{equation} \label{act+const}
    S[\phi; \lambda) = S_0[\phi] + \lambda_0 \int_X O\,,
\end{equation}
where $X$ is the spacetime manifold of dimension $n$ and $\lambda_0 \in \fR$ is a constant parameter. This system can be regarded as a consistent deformation of $S_0[\phi]$ in which case $\lambda_0$ is interpreted as the deformation parameter (coupling constant). In this situation the gauging of this deformation is an extended system, where $\lambda_0$ is promoted to a scalar field $\lambda$ and supplemented with its associated polymomentum $B$ such that the equations of motion imply $d\lambda=0$ and the initial dynamics is recovered if one restricts to the solutions with $\lambda=\lambda_0$. More precisely, one introduces scalar field $\lambda$ and the $n-1$-form field $B$ and considers the following action: 
\begin{equation} \label{act+const-gauged}
    S'[\phi, \lambda, B] = S_0[\phi] + \int_X \lambda(O + dB)\,.
\end{equation}
Here and in what follows we omit the wedge symbol for the exterior product of forms. The corresponding equations of motion read as: 
\begin{equation}
    \begin{gathered}
        d \lambda = 0\,, \\
        O + dB = 0\,, \\
        \delta S_0/ \delta \phi + \lambda \delta O/ \delta \phi = 0\,.
    \end{gathered}
\end{equation}
The first equation states that the field $\lambda$ is constant on-shell. Inserting it into the equations for $\phi$ gives back equations of motion which coincide with equations deduced from~\eqref{act+const} for $\lambda = \lambda_0$. If spacetime does not have nontrivial de-Rham cohomology in degree $n$ and $n-1$, the second equation can always be solved locally and its solution is unique modulo the natural gauge symmetry $B \to B+d\mu$. It is clear that for a sufficiently small interval containing a generic $\lambda_0 \in \fR$, the space of gauge-inequivalent solutions of the resulting system is the product of the initial one with the interval. In general the structure can be more involved as the initial solution space may differ drastically for different values of $\lambda$. This mechanism is known in the literature in one or another version for quite some time, see e.g.~\cite{Henneaux:1989zc,Bergshoeff:1996ui} and was recently discussed in some details in e.g.~\cite{Hajian:2023bhq,Tanizaki:2019rbk}.

As for the gauging usual global symmetries, the prototypical example is as follows. Let $S_0[\phi]$ be a first-order action functional invariant under a global symmetry Lie group $G$. At the infinitesimal level this means that there exists an action of the corresponding Lie algebra $\algg$ on the space of fields that preserves $S_0[\phi]$ and we denote by $\delta_\alpha \phi=T_\alpha(\phi)$ the corresponding transformations. By the Noether theorem such symmetries give rise to a conserved $n-1$ form $J$ with values in $\algg^*$, which we assume not to depend on derivatives of $\phi$ for the sake of simplicity. The standard "gauging" of such a symmetry corresponds to introducing a background field $A$, a 1-form valued in $\algg$, and replacing the starting point action $S_0[\phi]$ with
\begin{equation}
    S[\phi; A) = S_0[\phi] + \int_X \tr (AJ)\,,
\end{equation}
which is gauge invariant provided the initial $\algg$-action
does not have classical anomaly. The standard interpretation if $S[\phi; A)$ is that it describes the coupling of $\phi$ to the background gauge field $A$ which can be then promoted to the dynamical field by adding the term responsible for the self-interaction of $A$.

Again, instead of considering $A$ as a background field, one can regard it as a dynamical one and modify the system in such a way that it does not bring additional local degrees of freedom. This is achieved by introducing the associated polymomentum field $B$, an $(n-2)$-form with values in $\algg^*$, and taking the extended action functional to be 
\begin{equation} \label{act+form-gauged}
    S[\phi, A, B] = S_0[\phi] + \int_X \tr \left(A (J + dB + \half[A,B])\right)\,.
\end{equation}
The equations of motion read as
\begin{equation}
    \begin{gathered}
        d A + \half [A,A] = 0\,, \\
        J + dB + [A,B] = 0\,, \\
        \delta S_0/ \delta \phi + A \delta J/ \delta \phi = 0\,,
    \end{gathered}
\end{equation}
and the gauge transformation for fields $A,B$ take the following form (modulo terms vanishing on-shell)
\begin{equation}
  \delta A = d\alpha +[A,\alpha]\,, \qquad \delta \phi=T_\alpha (\phi)\,, \qquad \delta B =d \mu+\commut{A}{\mu} + \commut{B}{\alpha}\,,
\end{equation}
where $T_\alpha(\phi)$ denotes the initial $\algg$-action on fields $\phi$ and the parameters are $\alpha \in \bigwedge^0(X,\algg)$, $\mu \in \bigwedge^{n-3}(X, \algg^*)$. If $X$ is simply connected one can employ the above gauge freedom to set $A=0$. If in addition, de Rham cohomology in degree $n-1$ is empty one can also solve the second equation for $B$ modulo its gauge equivalence. The residual gauge transformations in the sector of $\phi$ are those with $\alpha=const$ and coincide with the initial global transformations. At the same time the on-shell configurations of $B$ can not contribute to the space of solutions. In other words, the configurations related by the initial global symmetry transformations are now gauge equivalent and hence the space of gauge-inequivalent solutions is merely the quotient of the starting point one by the action of $G$. The above procedure is known in one or another version for long time, see e.g.~\cite{Rocek:1991ps} and can be employed to generate the dual formulation of the same degrees of freedom, see e.g.~\cite{delaOssa:1992vci,Quevedo:1997jb}. More recent discussion of gauging global symmetries and their generic degree analogues can be found in e.g.~\cite{Gaiotto:2014kfa,Bhardwaj:2023kri,Argurio:2024ewp}. 

In this work we employ presymplectic BV-AKSZ approach for gauging symmetries of generic degree in a systematic way. The approach is based on working in terms of the typically finite-dimensional $\fZ$-graded geometrical object, called presymplectic gauge PDE (presymplectic gPDE) underlying the local gauge theory in question, see~\cite{Grigoriev:2022zlq,Dneprov:2024cvt}, see also~\cite{Alkalaev:2013hta, Grigoriev:2016wmk,Grigoriev:2020xec,Dneprov:2022jyn} for earlier but less general versions. In this approach, a ghost variable and its associated (shifted) momentum are introduced for each symmetry generator. Under some technical conditions a graded Lie algebra of such symmetries gives rise to a new gauge field theory, the topological deformation of the initial system, which is constructed directly in the presymplectic BV-AKSZ formulation. In so doing the new ghost variables which are initially introduced as coordinates on the corresponding presymplectic gPDE give rise to new fields (unless the ghost degree is outside the range of possible degrees of spacetime differential forms) which can be seen as fields of the AKSZ sigma model for generalised BF-type topological theory, which is coupled to the starting point system. It is this topological model that is responsible for the change of global degrees of freedom. From a complementary perspective, the resulting theory is a consistent deformation, in the sense of~\cite{Barnich:1993vg}, of the direct sum of the starting point theory and a decoupled generalised BF-type theory, the deformation being the coupling between the two. This is what we refer to as a topological deformation. Note that such a deformation does not affect the presymplectic structure of the direct sum and hence belongs to the class of consistent deformations studied within the presymplectic BV-AKSZ approach in~\cite{Frias:2024vgw}. In this sense the present work explores a specific class of deformations, which, in contrast to those considered in~\cite{Frias:2024vgw}, involve an additional topological sector.

The gauging procedure is then applied to generate topological deformations of both types. We start with pairs of deformation type, obtained by gauging deformations, and consider a number of examples leading to the extended systems where the coupling constants associated to the deformations are promoted to dynamical fields whose equations of motion set them constants. Moreover, by gauging a positively-graded Lie algebras of symmetries one generally arrives at field theories where the fields corresponding to starting point coupling constants are also subject to algebraic constraints that ensure the consistency of the deformation. The instructive examples is the gauging of a graded Lie algebra controlling the general Yang--Mills type deformations of $N$-copies of Maxwell field. In so doing the associated degree zero ghosts are precisely the would-be structure constants
and it is not surprising that the corresponding Jacobi identity arises as the equation of motion for these fields on top of the constancy condition. Let us mention that the idea to promote coupling constants to dynamical/background fields whose equations of motion impose consistency of the theory was discussed already in~\cite{Horowitz:1986dta,Grigoriev:2006tt,deWit:2008ta} in various frameworks.

Pairs of duality type naturally arise by gauging nonpositively graded Lie algebras of symmetries. In this case the resulting theory typically has less solutions, because the degree $0$ symmetries act on solutions and their gauging effectively takes the quotient. The result of gauging is naturally interpreted as a dual formulation and generally has a different structure of gauge invariance as clearly illustrated by the example of gauging the shift symmetry of the massless scalar field in $4$-dimensions, resulting in the $2$-form gauge field. The relation between global symmetries and dualities is known for a long time, see e.g.~\cite{Quevedo:1997jb}. Our point is that this relation can be made into a systematic and preserving manifest locality construction of the dual formulation, which starts with a given gauge theory and the subalgebra of its nonpositively-graded symmetries.

Within the BV-BRST approach to local gauge theories, the present construction is related to the extended antifield formalism of~\cite{Brandt:1997cz}, proposed to control global symmetries and subsequently extended to consistent deformations and anomalies in~\cite{Barnich:1998qz,Barnich:2000me}, resulting in a powerful setup for perturbative renormalisation. An important difference with our approach is that the new ghost variables of the extended antifield formalism are constants rather than local fields and serve merely as a bookkeeping device for the $L_\infty$-algebra lift of a graded Lie algebra in local BRST cohomology. In particular, any variables conjugate to them are constants as well, while in our approach both the new ghosts and their conjugate momenta give rise to local form fields. For deformations the form fields corresponding to the momenta can be regarded as conjugate polymomenta and play the role of Lagrange multipliers, but for symmetries of nonpositive degree they become the gauge fields of the dual formulation. This being said, in this work we limit ourselves to Lie algebras of so-called internal symmetries, though the extension to $L_\infty$-algebras should be straightforward.

It is also worth mentioning a different perspective on systems resulting in the present approach from gauging deformations. Such gauge theories, e.g.\ unimodular gravity, can be obtained by imposing constraints on fields and gauge parameters so that the associated local BV-BRST complex is modified and typically involves reducible gauge symmetries~\cite{Kaparulin:2019quz,Abakumova:2020ajc}. If one reformulates the resulting gauge system in the presymplectic BV-AKSZ form, the higher degree ghosts associated to the reducibility relations give rise to the higher degree coordinates on the presymplectic gPDE and their associated higher-form fields.

The paper is organized as follows. In Section~\bref{sec:prelim} we recall the necessary background on gauge PDEs and presymplectic gauge PDEs. Section~\bref{sec:gauging} introduces symmetries of generic degree and the gauging procedure, first for gauge PDEs and then for presymplectic gauge PDEs, where it results in topological deformations. In Section~\bref{sec:deform-examples} we consider pairs of deformation type: the massive scalar, unimodular gravity, and Yang--Mills theory as a deformation of $N$ copies of Maxwell theory. Section~\bref{sec:dualities} is devoted to pairs of duality type, including the scalar--2-form, $p$-form, and principal chiral model--Freedman--Townsend dualities. A direct BV computation for the scalar--2-form duality and our sign conventions are collected in the Appendices.

\section{Preliminaries}\label{sec:prelim}

\subsection{Gauge PDEs} \label{sec:gPDE}

At the level of equations of motion, a concise and geometrical formulation of local gauge theories
is provided by the gauge PDE approach~\cite{Grigoriev:2019ojp,Grigoriev:2022zlq}. It combines the geometrical approach to PDEs~\cite{Vinogradov:1977,Vinogradov1981} with the jet-bundle version of the BV formalism~\cite{Barnich:1995ap,Barnich:1995db,Barnich:2000zw}.

Let us first recall the notion of a $Q$-manifold (also known as DG manifold). By definition a $Q$-manifold $(F,Q)$ is a $\mathbb{Z}$-graded manifold $F$ equipped with a homological vector field $Q$ that is a vector field of degree $1$ and is required to square to zero: $Q^2 = \half [Q,Q] = 0$. The $\fZ$-degree is denoted by $\gh{\cdot}$ and is referred to as the ghost degree. The algebra of functions on $F$ is naturally a differential graded commutative algebra, with the differential being $Q$. The standard example of a $Q$-manifold is the shifted tangent bundle $T[1]X$ with canonical $Q$-structure $\dx$ which is the de Rham differential of $X$ seen as a vector field on $T[1]X$. If $\{\theta^a, x^a\}$ are local coordinates on $T[1]X$ with ghost degrees $1$ and $0$ respectively, the vector field takes the form $\dx = \theta^a \ddl{}{x^a}$. $Q$-manifolds have a notion of equivalent reduction, namely, $(F,Q)$ is an equivalent reduction of $(F',Q')$ if $(F',Q') = (F,Q) \times (T[1]V, d_V)$ where $V$ is a graded vector space and $d_V$ is the canonical $Q$-field on the shifted tangent bundle to $V$.\footnote{More generally, one can allow $F'\to F$ to be a locally trivial $Q$-bundle with fibre $(T[1]V,d_V)$ admitting a global $Q$-section~\cite{Grigoriev:2019ojp}.} One also says that $(F_1,Q_1)$ and $(F_2, Q_2)$ are equivalent $Q$-manifolds if they are both equivalent reductions of some $(F,Q)$. Natural maps between $Q$-manifolds are maps between the manifolds which intertwine the differential, i.e.\ $\phi: (F_1,Q_1) \xrightarrow{} (F_2, Q_2) $ is a $Q$ map iff $Q_1\phi^* = \phi^* Q_2$. A $Q$-bundle is a bundle where both total space and the base space are $Q$-manifolds and the projection map is a $Q$-map~\cite{Kotov:2007nr}. When considering systems with fermions, $F$ carries an additional fermionic degree $\ferm{\cdot}$, and the supercommutative algebra structure is determined by the total parity $\p{f}=\gh{f}+\ferm{f}$ mod $2$, so that $fg=(-1)^{\p{f}\p{g}}gf$ for homogeneous $f,g\in\cC^\infty(F)$. For simplicity, in this work we restrict ourselves to systems without fermions, i.e.\ $\ferm{\cdot}=0$, so that $\p{f}=\gh{f}$ mod $2$.

A gauge PDE (gPDE) is a $Q$-bundle $(E,Q) \xrightarrow{\pi} (T[1]X,\dx)$, where $X$ is the spacetime manifold. Sections of the bundle are interpreted as fields, and equations of motion are encoded in the $Q$-structure: a solution is a section which is also a $Q$-section, i.e.\ for $\sigma: T[1]X \xrightarrow{} E$ the equations read:
\begin{equation}
    \dx \circ \sigma^* = \sigma^* \circ Q\,.
\end{equation}
Gauge transformations are parameterized by vector fields $Y$, $\pi_*Y = 0$ of ghost degree $-1$ called gauge parameters and are given by:
\begin{equation}
    \delta_Y \sigma^* = \sigma^* \circ [Q,Y]\,.
\end{equation}
Gauge-for-gauge transformations are encoded by vector fields $Z$, $\pi_*Z = 0$ of degree $-2$ and act on gauge parameters $Y$ by 
\begin{equation}
    \delta_Z Y = [Q,Z]\,.
\end{equation}
Gauge PDEs are considered equivalent if they are equivalent as $Q$-bundles over $T[1]X$.

The first natural example of a gauge PDE is constructed from a non-Lagrangian BV formulation on the infinite jet bundle, with the $Q$-structure given by the sum of the BV differential and the horizontal differential (seen as a vector field). All reasonable gauge PDEs arise as equivalent reductions of some gauge PDE corresponding to such a BV system on jets. One can show, that the space of solutions modulo gauge equivalences does not change under equivalent reductions of gauge PDEs. Another useful fact is that one can recover the non-Lagrangian BV formulation from the gPDE: the BV fields arise as components of supersections. Therefore gPDEs can be seen as a compact and flexible way to store the data of a local gauge field theory. For further details see \cite{Grigoriev:2022zlq,Grigoriev:2019ojp}.

\subsection{Presymplectic structures and Lagrangians}\label{sec:presymplect-gPDE}

To encode a Lagrangian, a gauge PDE is equipped with an additional structure, a compatible presymplectic structure~\cite{Alkalaev:2013hta,Grigoriev:2016wmk,Grigoriev:2022zlq}. A gauge PDE with compatible presymplectic structure is a gauge PDE $(E,Q)\xrightarrow{} (T[1]X,\dx)$ together with a two-form $\omega$ on the total space and a function $H$ of degree $n=\dim X$ obeying:
\begin{equation} \label{gpde-presymp-axioms}
    \begin{gathered}
        d\omega = 0\,, \qquad \gh{\omega} = n - 1\,, \\
        i_Q \omega + dH \in \cI\,,
    \end{gathered}
\end{equation}
where $\cI \subset \bigwedge^\bullet(E)$ is the differential ideal generated by the forms $\pi^*\alpha, \alpha \in \bigwedge^{1}(T[1]X)$. Our conventions for differential forms on graded manifolds are summarised in Appendix~\bref{app:conventions}.

It can be shown that these data define a BV formulation. Let us briefly recall how it is done. Let $x^a, \theta^a$ be coordinates on $T[1]X$ and $\psi^A$ are local fibre coordinates. The space of supersections $S\Gamma(E)$ is parameterized by superfields
\begin{equation}
\psi^A(x,\theta) = \ev^*(\psi^A)=\st{0}{\psi}{}^A(x) +\st{1}{\psi}{}^A_a (x)\theta^a + \frac{1}{2}\st{2}{\psi}{}^A_{ab} (x)\theta^a\theta^b + \dots\,,
\end{equation}
where $\ev:S\Gamma(E)\times T[1]X \to E$ is the corresponding evaluation map.
The structure $\omega$ defines a 2-form of ghost degree $-1$ on the space of supersections by:
\begin{equation} \label{AKSZ-BV-struc}
\Omega = \int_{T[1]X} \ev^*(\omega)=\int_{T[1]X}\delta \psi^A(x,\theta)\, \delta \psi^B(x,\theta)\,\omega_{AB}(\psi(x,\theta))\,,
\end{equation}
where we have introduced the components of $\omega$ via $\omega = d\psi^A d\psi^B\omega_{AB}$.
If this 2-form happens to be regular (i.e.\ has kernel of constant rank) one can take the quotient by the kernel distribution. The quotient is the symplectic space of the sought-for BV formulation. The BV action functional is given by 
\begin{multline} \label{AKSZ-BV-action}
    S_{BV} = \int_{T[1]X}(\ev^*{\chi})(\dx)+\ev^* (H)=\\
    \int d^nx\, d^n\theta \left( (\dx \psi^A(x,\theta))\,\chi_A(\psi(x,\theta)) + H(\psi(x,\theta))\right)\,,
\end{multline}
where $\chi = d\psi^A\chi_A$ is the presymplectic potential: $\omega = d\chi$. One can view this construction as a generalization of the AKSZ construction~\cite{Alexandrov:1995kv} for the case of non-topological field theories. The above BV action is the BV extension of the classical action which is conveniently defined as a functional on the space of sections $\sigma: T[1]X\to E$ and can be compactly written as
\begin{equation}
S[\sigma]=\int_{T[1]X}(\sigma^*{\chi})(\dx)+\sigma^* (H)\,.
\end{equation}
In what follows this action functional is referred to as the intrinsic action because it is defined in terms of the intrinsic geometry of the gPDE with presymplectic structure.

It is clear from the form of the intrinsic action that it is invariant under the algebraic gauge transformations generated by the regular vertical vector fields satisfying $i_V\omega\in \cI$. Vector fields satisfying this condition form an involutive distribution on $E$, which we call the kernel distribution of $\omega$. Moreover, not only the action but also the entire structure of the underlying gauge theory can be determined in terms of the quotient of $E$ by the maximal regular involutive subdistribution of the kernel distribution of $\omega$. Passing from the initial bundle $E$ to such a quotient is of course not an equivalence for the underlying gPDE, and it might even spoil the $Q^2 = 0$ condition. However, a condition which is preserved and actually ensures that the corresponding BV action satisfies the master equation is given by 
\begin{equation}\label{presymp-master}
    \half i_Q i_Q \omega + QH = 0\,.
\end{equation}
Together with conditions \eqref{gpde-presymp-axioms} this condition implies that $Q^2$ belongs to the kernel of the presymplectic structure. This motivates the following:
\begin{definition}\cite{Dneprov:2024cvt,Grigoriev:2022zlq}
A presymplectic gauge PDE is a bundle $E \xrightarrow{} T[1]X$ equipped with a vector field $Q$ of degree $1$, not necessarily nilpotent, that projects to $\dx$, a function $H$ of degree $n=\dim X$ and a presymplectic structure $\omega$, $\gh{\omega}=n-1$, obeying \eqref{gpde-presymp-axioms} together with \eqref{presymp-master}.
\end{definition} 
An alternative is to require $E$ to be equipped with $\chi$ and $H$ such that there exists $Q$ that projects to $\dx$ and the compatibility conditions are satisfied for $\omega\equiv d\chi$. In this version $\chi$ can be considered as an equivalence class modulo $\cI$ and these data define the corresponding (BV) action unambiguously. Indeed, the ambiguity in choosing $\chi$ such that $\omega=d\chi$ corresponds to the ambiguity in adding boundary terms to the intrinsic action.

Again, the data of a presymplectic gPDE defines a BV formulation by \eqref{AKSZ-BV-struc}, \eqref{AKSZ-BV-action}, provided the presymplectic structure $\ev^*(\omega)$ happens to be regular. The advantage of presymplectic gPDEs is that, in contrast to gPDEs, they can usually be taken finite-dimensional, if the underlying gauge system is local. Further details can be found in \cite{Grigoriev:2022zlq,Dneprov:2024cvt,Dneprov:2025eoi}.

\section{Gauging symmetries and deformations}\label{sec:gauging}

\subsection{Gauge PDE symmetries of arbitrary degree}

To begin with, let us consider the gauging of generic-degree symmetries of gauge systems defined at the level of equations of motion, i.e.\ when a Lagrangian/presymplectic structure is not specified or does not even exist. It turns out that in this case gauging is rather straightforward and has a clear interpretation.

Let us first recall the notion of a generic-degree symmetry for a gPDE. Let $(E,Q) \xrightarrow{\pi} T[1]X$ be a gPDE. A vertical symmetry is a vertical vector field $V$ on $E$ such that $[Q,V] = 0$ and $\gh{V}=k$. A $Q$-exact symmetry, i.e.\ a symmetry of the form $V= [Q,Y]$, is a gauge symmetry by definition of a gPDE. Global symmetries are equivalence classes of symmetries modulo gauge ones. The restriction to vertical vector fields is of a technical nature and does not lead to any loss of generality if one restricts to local analysis. More generally, one should consider projectable vector fields; see~\cite{Dneprov:2025eoi} for more details. Note that in the BV framework the notion of a nonvansihing-degree symmetry and the appropriate generalisation of the Noether theorem are known in the context of Lagrangian gauge theories, see e.g.~\cite{Barnich:2000zw} and references therein. In a more general BV-BRST setup for local gauge theories, symmetries of not necessarily vanishing degree were also discussed in~\cite{Sharapov:2016sgx}. The thorough discussion of such symmetries for (presymplectic) gPDEs can be found in~\cite{Dneprov:2025eoi}.

Symmetries of ghost degree $0$ act on sections in a natural way. Namely, if $\sigma: T[1]X \xrightarrow{} E$ is a section, then its infinitesimal variation determined by a vertical vector field $V$ reads as
\begin{equation}
\delta \sigma^* = \sigma^* \circ V\,.
\end{equation}
If $V$ is a symmetry, it is easy to check that the above variation is an infinitesimal symmetry of the equations of motion, i.e.\ if $\sigma$ is a solution of the underlying gPDE, then $\sigma + \delta \sigma$ is again a solution to first order in $V$. Symmetries of ghost degree $0$ form a module over the constants. It is more natural, however, to think of symmetries as a module over $\dx$-closed functions on the base. For degree $0$ symmetries, only degree $0$ functions are allowed, so that we recover a module over the constants if $X$ is connected.

Symmetries of degrees $-k < 0$ do not act on sections in a natural way. However, they admit a straightforward field-theoretical interpretation. First of all, when regarded as modules over constants, they describe level-$k$ reproducibilities of gauge (for gauge) parameters and hence are naturally regarded as symmetries of gauge (for gauge) parameters. In addition, if the de Rham cohomology of $X$ in positive degrees is nontrivial, negative-degree symmetries can give rise to the usual degree $0$ symmetries. Indeed, let $V$ be a symmetry of ghost degree $-k, k>0$. Then, for any $\alpha \in C^\infty(T[1]X)$, $gh(\alpha) = p, \dx \alpha = 0$, one finds that $\pi^*(\alpha)V$ is a symmetry of ghost degree $-k+p$. Moreover, in the case $\alpha = d_X \beta$, it follows that $\pi^*(\alpha)V = [Q, \pi^*(\beta) V]$, i.e.\ such a symmetry is gauge. This shows that equivalence classes of symmetries form a module over the spacetime de Rham cohomology.

The interpretation of symmetries of degree $1$ and higher 
is not so straightforward not to mention that just like the negative degree ones they do not act on sections. To unravel the meaning of positive degree symmetries, consider a symmetry $q$ of degree $1$ obeying $[q,q] = 0$. Such a symmetry encodes a possible deformation of the $Q$ structure, because $Q' = Q + q$ obeys $Q'^2 = 0$. Now more generally for a symmetry $q$, $\half [q,q] = q^2$ does not have to be zero. In this case $q$ encodes only an infinitesimal deformation of $Q$, i.e.\ a deformation $Q' = Q + tq$ with a formal parameter $t$ obeying $t^2 = 0$. Because $q$ is a symmetry, it follows that $q^2$ is always $Q$-closed, therefore in general $[q,q]$ gives a symmetry of degree $2$. The cohomological class of this symmetry encodes an obstruction to the second order deformation, a deformation $Q + tq + t^2(\dots)$ with formal parameter $t$ obeying $t^3 = 0$. Indeed, if $q^2$ is cohomologically trivial, i.e.\ $[q,q] = [Q,V]$ then $q$ determines a deformation up to cubic terms by $(Q + tq - \half t^2 V)^2 = 0 \quad mod(t^3)$. Let us mention that not all symmetries of degree $2$ and higher should come as obstructions to deformations of this sort. 

A more general setup is to consider a collection of symmetries which form a graded Lie algebra modulo the trivial ones that is modulo $Q$-exact vector fields. In particular, this happens if functions on $E$ form a module over $L_\infty$ algebra that acts via vertical fields. For the moment we refrain from considering such a more general situation.

\subsection{Gauging gPDE symmetries}

Now let us turn our attention to the gauging of symmetries. Usually, the term "gauging" refers to a procedure of coupling a given system to the gauge field associated to its global symmetry. In so doing a constant parameter of the global symmetry is promoted to an arbitrary spacetime function. In this work we concentrate on a very specific version of gauging where the gauge field is subject to a zero-curvature equation so that it does not bring new local degrees of freedom. It turns out that such a gauging can be implemented within the gPDE approach in a systematic and geometrical way.

Let $(E,Q)$ be a gPDE and $\algg$ a graded Lie algebra of its vertical symmetries, with basis $V_i$ obeying $[Q,V_i]=0$ and
\begin{equation}
    [V_i, V_j] = f^k{}_{ij} V_{k}\,.
\end{equation}
In particular, $\cC^\infty(E)$ is a $\algg$-module. 
The gauging of these symmetries results in the gPDE $(\tilde E,\tilde Q)$ with total space $\tilde{E} = E \times \algg[1]$ and $\tilde{Q} = Q + q_{CE}$, where $q_{CE}$ is the Chevalley--Eilenberg differential of $\algg$ with coefficients in $\cC^\infty(E)$. Besides the original system, $(\tilde E,\tilde Q)$ contains a topological sector of flat $\algg$-connections coupled to it through the term $c^iV_i$. It is therefore natural to refer to $(\tilde E,\tilde Q)$ as the topological deformation of $(E,Q)$ at the level of equations of motion. In contrast to the Lagrangian case considered in the next subsection, the conjugate momenta, and hence the $B$-fields of the BF-type sector, are absent. If $c^i$ denote coordinates on $\algg[1]$ then $\tilde Q$ can be explicitly written as:
\begin{equation}
\tilde Q=Q+q_{CE} = Q+c^i V_i + (-1)^{\p{j}} \frac{1}{2}f^k{}_{ij}c^j c^i \ddl{}{c^k}\,.
\end{equation}
Here and below we use the short notation $\p{i}\equiv \p{V_i}$. Recall that we assume $\ferm{\cdot}=0$, so that $\p{\cdot}=\gh{\cdot}$ mod $2$.

Introducing local coordinates $\psi^A$ parametrising the fibre of $E$ the space of sections is coordinatized by 
\begin{equation}
    \sigma^*c^i = A^i_{a_1...a_{k(i)}}(x) \theta^{a_1}...\theta^{a_{k(i)}} \,, \qquad \sigma^* \psi^A = \psi^A(x,\theta)\,,
\end{equation}
where $k(i) = \gh{c^i}$. Note that if $\gh{c^i}<0$ or $\gh{c^i}>\dim{X}$ then $\sigma^*(c^i)=0$ and hence the corresponding field does not arise. $A^i_{a_1...a_k}$ can be seen as a component of a $k$-form $A^i\equiv \sigma^*(c^i)$ on $X$. The equations of motion of $(\tilde E,\tilde Q)$ are given explicitly by
\begin{equation}
\begin{gathered}
    d_X A^i + \half[A,A]^i = 0\,, \\
    d_X\psi^A(x,\theta) = Q^A(\psi) + A^i V_i{}^A(\psi)\,.
\end{gathered}
\end{equation}

In the case when $\algg$ is an ordinary (degree 0) Lie algebra it is clear that equations of motion force $\algg$-valued 1-form fields $A^i \equiv \sigma^*(c^i)$ to satisfy zero curvature equation while the gPDE gauge transformation
is precisely the standard gauge transformation of $\algg$-connection. What happens to the degrees of freedom and the space of solutions after this operation? If the underlying spacetime manifold is connected and simply connected, then a flat connection is always a pure gauge. This means, one can choose a gauge with $A^i = 0$ and the resulting system has the equations of motion of the starting point gPDE $(E,Q)$. In this gauge, there are however residual gauge transformations which are precisely the global symmetries of the original theory. Therefore, the space of gauge-inequivalent solutions becomes smaller: it coincides with the quotient of the space of solutions of the original theory by the corresponding global symmetry. For a topologically nontrivial spacetime, additional global sectors arise from the moduli space of flat connections, while the original fields obey the twisted modification of the initial equations. Note that because $A^i$ is a globally defined 1-form, the underlying principal bundle will always be trivial as a bundle. If one wishes to account for non-trivial bundle structure, one has to perform the construction in each patch and then glue patches in a consistent way. This can be done systematically but goes beyond the scope of this paper.

Another natural extension concerns the case where $\algg$ is non-positively graded. Then the ghosts $c^i$ have degree $\geq1$, so that the fields $A^i$ are differential forms of positive degree, and the equations of motion state that $\sigma^*$ restricted to the ghost sector is a morphism of differential graded algebras $CE(\algg)\to\Omega(X)$, i.e.\ a flat higher connection. Such structures go back to Sullivan~\cite{Sullivan:1977pdi}, appeared in physics as free differential algebras~\cite{DAuria:1982uck} and are systematically employed in the unfolded formalism~\cite{Vasiliev:1988xc}. From the field-theory perspective, this corresponds to gauging higher-form symmetries~\cite{Gaiotto:2014kfa}, and the present construction provides an alternative framework for such gauging.

At the level of equations of motion, i.e.\ without a presymplectic structure, the gauging of degree-$0$ symmetries does not produce dual potentials, since the conjugate momenta are absent. For instance, gauging the shift symmetry of the massless scalar $\phi$ and eliminating the contractible pair $\phi$, $c$ leaves a formulation in terms of $\phi_a$ and its higher derivatives. This is closely related to the unfolded formalism, where dual formulations of a given system share the same gauge-invariant (Weyl) module and differ in the choice of gauge module, see e.g.~\cite{Matveev:2004ac, Skvortsov:2008vs,Skvortsov:2009nv,Boulanger:2008up}. The Lagrangian version of the construction, considered in the next subsection, supplements this with the conjugate momenta, which become the dual gauge fields.

Let us also comment on what happens if our gPDE is defined on the manifold with boundaries and the boundary conditions require the gauge parameters to vanish at the boundary. In this case the "gauge" transformations with constant parameters are no longer gauge and the gauged system is actually completely equivalent to the starting point one, provided the suitable boundary conditions are imposed on $A^i$ as well.

Now let us consider the case when $\algg$ is 1-dimensional and its basis element $V$ is of degree $1$. In particular, it follows that $V$ obeys $[V,V] = 0$. As discussed above, such a symmetry encodes a deformation of the $Q$-structure: $Q+\alpha V$ is nilpotent for any $\alpha\in\fR$ and hence defines a deformed gPDE structure on $E$. Applying our gauging procedure we extend the total space to $\tilde E = E \times \fR$. Parameterizing $\fR$ with coordinate $c$ of $\gh{c} = 0$ the space of sections of $\tilde{E}$ is parametrized by 
\begin{equation}
    \sigma^*c = \Lambda(x) \,, \qquad \sigma^* \psi^A = \psi^A(x,\theta)\,,
\end{equation}
where, as before, we denote by $\psi^A$ the fibre coordinate on $E$. The equations of motion of $(\tilde{E}, \tilde{Q})$
take the form
\begin{equation}
\begin{gathered}
    d_X \Lambda = 0\,, \\
    d_X \psi^A = \sigma^*(Q \psi^A) + \Lambda\sigma^*(V\psi^A)\,.
\end{gathered}
\end{equation}
These equations imply that $\Lambda$ is constant on any connected domain and hence can be regarded as a global conserved charge. Moreover, equations for $\sigma^*(\psi^A)$ are deformed compared to those for the initial gPDE $(E,Q)$, an extra vertex has been added, multiplied by the coupling constant $\Lambda$. This is the equations-of-motion counterpart of the deformation-type examples considered in Section~\bref{sec:deform-examples}.  

It is also instructive to consider a more general situation: suppose we gauge a graded algebra of symmetries concentrated in degrees $1$ and $2$ with $V_i\,, \gh{V_i} = 1$ obeying $[V_i, V_j] = f_{ij}^{\alpha} W_{\alpha}$ and $0 = [W_{\alpha}, W_{\beta}] = [V_{i}, W_{\alpha}]$. To gauge it we introduce coordinates $c^i$ and $c^\alpha_{-1}$ of ghost degrees $0$ and $-1$ correspondingly. The action of $\tilde Q$ is given by 
\begin{equation}
\label{12-rel}
\begin{gathered}
    \tilde Q c^\alpha_{-1} = -\half f^{\alpha}_{ij}c^i c^j\,, \\
    \tilde Q c^i = 0\,.
\end{gathered}
\end{equation}
The equations of motion in the sector of these variables read as
\begin{equation}
\dx \sigma^*(c^i)=0\,, \qquad f^\alpha_{ij}\sigma^*(c^i)\sigma^*(c^j)=0\,.
\end{equation}
The first equation implies that $\Lambda^i\equiv \sigma^*(c^i)$ are constant, while the second equation implies that the corresponding deformation of the initial system, namely $(E,Q_{\Lambda})$ with $Q_{\Lambda}\equiv Q+\Lambda^i V_i$, is consistent, i.e.\ $Q_\Lambda^2=0$. A Lagrangian counterpart of the same phenomenon is discussed in Section~\bref{sec:abelian2YM}, where the direct analogue of the above equations requires the structure constants parametrising the deformation to satisfy the Jacobi identity.

\subsection{Symmetries of presymplectic gPDEs and their gauging }

In the previous subsection we have seen that given a graded Lie algebra action compatible with $Q$ one can gauge it, resulting in a new gPDE whose space of inequivalent solutions differs, roughly speaking, by a finite-dimensional factor. More precisely, the modification depends on the space-time topology and, furthermore, can be much more involved if one considers systems where equations of motion have hidden differential consequences
or one is interested in finite values of global charges or finite symmetry transformations.

In this section we extend the gauging procedure to Lagrangian setting by employing the presymplectic BV-AKSZ approach. 
It turns out that the extension is rather straightforward if we restrict ourselves to symmetries that strictly preserve the presymplectic structure. In applications it is often more suitable to use presymplectic gPDEs rather than gPDEs with compatible presymplectic structures because in this way one can work with finite-dimensional spaces only. The construction of gauging arbitrary degree symmetries is immediately formulated in terms of presymplectic gPDE, bearing in mind that at any step it can be reformulated in terms of the corresponding gPDE with presymplectic structure. Details on the BV-AKSZ approach and presymplectic gauge PDEs are collected in Section~\bref{sec:presymplect-gPDE}, see also references therein.

Let $(E,Q,\omega)$ be a presymplectic gPDE. First, we need to recall what a symmetry of such an object is. For our purposes it is enough to limit ourselves to a subclass of symmetries, so called internal symmetries.
We say that a graded Lie algebra $\algg$ is represented on $E$ by internal symmetries, if the fundamental vector fields $V_i$ obey:
\begin{equation} \label{pres-top-def-cond}
\begin{gathered}
    \pi_* V_i = 0\,, \\
    [Q,V_i] \in \cK_\omega\,,\qquad [V_i, V_{j}] = f^k{}_{ij} V_k\,, \\
    L_{V_i}\omega \in \cI\,,
\end{gathered}
\end{equation}
where $\cK_\omega = \{X\in \mathfrak{X}(E): \pi_*X = 0, i_X \omega \in \cI \}$ is the vertical kernel distribution of $\omega$. If $\gh{i_{V_{i}}\omega}\neq 0$ then the last equation implies that there exist conserved currents $J_i\in \cC^\infty(E)$ satisfying 
\begin{equation}
\label{Vi-ham}
i_{V_{i}}\omega+dJ_i\in \cI\,.
\end{equation}
For $\gh{i_{V_{i}}\omega}=0$ such $J_i$ might exist only locally.
Furthermore, one can show that the internal symmetries of degree 0 define symmetries of the underlying intrinsic action and the charges associated to these symmetries are determined by $J_i$. It is clear that $J_i$ are determined by \eqref{Vi-ham}
modulo terms of the form $\pi^*f$, $f\in \cC^\infty(T[1]X)$. If de Rham cohomology of $X$ is trivial in degree $\gh{i_{V_i}\omega}+1$ then one can use the freedom in $J_i$ to achieve $QJ_i=0$. One can show that  
$(-1)^{|i|}V_{i} J_{j} = f^k{}_{ij} J_k + \pi^*(\kappa_{ij})$, where $\kappa_{ij}$
are coefficients of the $2$-cocycle of $\algg$, which are also functions on $T[1]X$. If $\kappa_{ij}$ happen to be a trivial cocycle one can show that it can be removed by a redefinition of $J_i$. In general, $\kappa_{ij}$ is the classical anomaly arising from the fact that Noether charges represent the symmetry algebra projectively. Further details on symmetries of presymplectic gauge PDEs can be found in~\cite{Dneprov:2025eoi}.

\begin{prop}\label{prop:main}
Let $(E,Q,\omega)$ be a presymplectic gPDE, graded Lie algebra $\algg$ acts on $E$ by internal symmetries, and the corresponding currents $J_i$ be defined globally and satisfy $QJ_i=0$. If in addition the classical anomaly vanishes, i.e.\ $(-1)^{|i|}V_{i} J_{j} = f^k{}_{ij} J_k$, then there exists a new presymplectic gPDE $E' = E \times T^*[n-1]\algg[1]$ over $T[1]X$ whose $Q$-structure is given by
\begin{multline}
    Q' = Q + c^i V_i + (-1)^{|j|}\frac{1}{2}f^k{}_{ij} c^j c^i \frac{\partial}{\partial c^{k}} + 
    \\
    ((-1)^{|i|(n+1)}J_i -
    (-1)^{|j| + |i|+ n} f^k{}_{ij} b_k c^j )\frac{\partial}{\partial b_{i}}\,,
\end{multline}
and 
\begin{equation}
    \omega' = \omega + \omega_{can}\,,\qquad \omega_{can}=db_idc^i\,,
\end{equation}
where $c^i$, $\gh{c^i}=1-\gh{V_i}$ are linear coordinates on $\algg[1]$ and $b_i$, $\gh{b_i}=n-1-\gh{c^i}$ are the corresponding coordinates on the fibres of $T^*[n-1]\algg[1]$.
\end{prop}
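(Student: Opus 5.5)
To prove the proposition we check directly the defining conditions of a presymplectic gPDE from Section~\bref{sec:presymplect-gPDE} for the data $(E',Q',\omega',H')$, where the new Hamiltonian is
\begin{equation*}
H' = H + h\,,\qquad h=\pm\, c^iJ_i \pm \half f^k{}_{ij}\, b_k c^j c^i\,,
\end{equation*}
with signs to be fixed. The structure of $h$ mirrors the action \eqref{act+form-gauged}: a coupling of the ghosts to the currents, plus the BF/Chevalley--Eilenberg term.

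\emph{Routine conditions.} Degrees are bookkept first. Since $\gh{c^i}+\gh{b_i}=n-1$, we get $\gh{\omega_{can}}=n-1$. Moreover, $\gh{c^iJ_i}=n$, because $\gh{J_i}=\gh{i_{V_i}\omega}=n-1+\gh{V_i}$ and $\gh{c^i}=1-\gh{V_i}$. Also $d\omega'=0$, and $Q'$ projects to $\dx$ because every added term is vertical.

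\emph{The condition $i_{Q'}\omega'+dH'\in\cI$.} We split $i_{Q'}\omega'$ into three parts:
\begin{itemize}
\item $i_Q\omega$, which combines with $dH$ and lies in $\cI$ by the assumption on $(E,Q,\omega)$;
\item $c^i i_{V_i}\omega$, which by \eqref{Vi-ham} equals $-c^i dJ_i$ modulo $\cI$;
\item $i_{Q'}\omega_{can}$, which gives $(Q'b_i)\,dc^i \pm (Q'c^i)\,db_i$.
\end{itemize}
The terms proportional to $db_k$ are matched by $d$ of the cubic term in $h$ together with the CE part of $Q'c^k$. The terms proportional to $dc^i$ are matched by $J_i\,dc^i$ together with the $f\,b\,c$ part of $Q'b_i$. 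The term $c^i\,dJ_i$ is cancelled by $-c^i dJ_i$ from the second bullet. In fact this comparison determines the signs in $h$, and confirms the specific signs in $Q'$, via the conventions of Appendix~\bref{app:conventions}. Linear coordinates make this straightforward.

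\emph{Master condition \eqref{presymp-master}.} This is the main step. Rather than computing $\half i_{Q'}i_{Q'}\omega'+Q'H'$ directly, I use a shortcut. Once the previous condition holds, $i_{Q'}\omega'=-dH'$ modulo $\cI$. Then
\begin{equation*}
\half i_{Q'}i_{Q'}\omega' + Q'H' \equiv \half Q'H' \quad\text{modulo } \pi^*\text{-terms},
\end{equation*}
so it suffices to show $Q'H'=0$; more precisely, that the expression, being a function of degree $n+1$, vanishes. Expanding $Q'H'$ produces the following groups of terms.
\begin{itemize}
\item $QH+\half i_Qi_Q\omega=0$: the original master condition.
\item $c^iV_iH$ and $c^iQJ_i$: here $QJ_i=0$ by hypothesis. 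To handle $V_iH$, contract \eqref{Vi-ham} with $Q$ and use $i_Q\omega=-dH$ modulo $\cI$. This gives $V_iH=\pm QJ_i=0$ up to $i_{[Q,V_i]}$-terms, which vanish since $[Q,V_i]\in\cK_\omega$.
\item $c^jc^iV_iJ_j$ against $f^k{}_{ij}c^jc^iJ_k$: these cancel precisely by the no-anomaly assumption $(-1)^{|i|}V_iJ_j=f^k{}_{ij}J_k$.
\item The cubic $b$-terms: these cancel by the Jacobi identity, since $q_{CE}^2=0$.
\end{itemize}

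The main obstacle I expect is the sign bookkeeping in the last two groups. The $(-1)^{|i|(n+1)}$ and $(-1)^{|j|+|i|+n}$ factors in $Q'$ must match both the Koszul signs from moving $c^i$ past $V_i$ and $J_j$ and the convention $\omega_{can}=db_idc^i$. I would fix the signs by first treating the case of $\algg$ concentrated in degree $0$ with $n$ even, then tracking parities in general. Finally, the internal-symmetry condition $L_{V_i}\omega\in\cI$ is not needed separately beyond its role in producing $J_i$ through \eqref{Vi-ham}.
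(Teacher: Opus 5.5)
Your degree count and your check of $i_{Q'}\omega'+dH'\in\cI$ agree with the paper. The shortcut for the master condition \eqref{presymp-master}, however, is invalid. Set $\alpha'\equiv i_{Q'}\omega'+dH'\in\cI$. Then exactly $\half i_{Q'}i_{Q'}\omega'+Q'H'=\half Q'H'+\half i_{Q'}\alpha'$, and $i_{Q'}\alpha'$ is not a $\pi^*$-term. A $1$-form in $\cI$ is $dx^af_a+d\theta^ag_a$ with $f_a,g_a$ arbitrary functions on $E'$, so $i_{Q'}\alpha'=\theta^af_a$, which has no reason to vanish. Your own first step gives $\alpha'=(i_Q\omega+dH)+c^i(i_{V_i}\omega+dJ_i)$, so the discarded term contributes at orders $0$ and $1$ in $c$. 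Consequently ``it suffices to show $Q'H'=0$'' replaces the goal by a statement that is false in general.

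The same omission breaks your second bullet. Since $V_i$ is vertical, $i_{V_i}$ annihilates $1$-forms in $\cI$, and since $i_Q$ and $i_{V_i}$ commute, one finds exactly $V_iH=-i_Qi_{V_i}\omega=QJ_i-i_Q(i_{V_i}\omega+dJ_i)$. The last term is $i_Q$ of an element of $\cI$; no $i_{[Q,V_i]}$-terms arise. Hence $QJ_i=0$ does not give $V_iH=0$. Your first bullet already shows the inconsistency: $Q'H'$ contains $QH$, not $QH+\half i_Qi_Q\omega$.

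Here is a minimal counterexample to both intermediate claims. Take $n=1$, fibre coordinates $q,p$ of degree $0$, $\omega=dp\,dq$, $H=\half\theta p^2$, $Q=\theta\partial_x-\theta p\,\partial_q$, and the Galilean boost $V=x\partial_q-\partial_p$ with $J=q+xp$. All hypotheses hold: $QJ=0$, $[Q,V]=0$, $i_V\omega+dJ=dx\,p$. Yet $VH=-\theta p\neq0$. For the gauged system, $Q'H'=\theta cp\neq0$ while $\half i_{Q'}i_{Q'}\omega'=-\theta cp$, so the master condition holds only because of the term you dropped.

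The paper avoids this by expanding $\half i_{Q'}i_{Q'}\omega'$ directly rather than substituting $i_{Q'}\omega'\approx-dH'$ inside the double contraction. The $c$-independent part is then the original condition. The cross terms $i_{c^iV_i}i_Q\omega$ combine with $c^iV_iH$ from $Q'H'$ into $i_{c^iV_i}(i_Q\omega+dH)=0$, by verticality. The term $\half i_{c^iV_i}i_{c^jV_j}\omega$ is evaluated using $i_{V_j}(i_{V_i}\omega+dJ_i)=0$. What remains cancels by $QJ_i=0$, the no-anomaly condition and the Jacobi identity.

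Your last two bullets are structurally sound: $\alpha'$ is at most linear in $c$ and independent of $b$, so those sectors are unaffected by $i_{Q'}\alpha'$. Alternatively, you can keep $\half i_{Q'}\alpha'$ and cancel it against $\half QH$ and $\half c^iV_iH$ using the identity for $V_iH$ above, but it cannot simply be discarded.
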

\begin{proof}
One can explicitly check that the axioms of a presymplectic gPDE hold for the deformed structures. Indeed,
\begin{equation}
    i_{Q'} \omega + d(H) + c^i d(J_i) \in \cI\,,
\end{equation}
together with 
\begin{equation}
    i_{Q'} \omega_{can} = (-1)^{|i|}dc^iJ_i + d(-\half (-1)^{(|j|+1)} f^k{}_{ij} b_k c^j c^i )\,,
\end{equation}
gives   
\begin{equation}
    i_{Q'} \omega' + d(H -(-1)^{|i|} c^i J_i + \half (-1)^{(|j|+1)} f^k{}_{ij} b_k c^j c^i ) \in \cI\,,
\end{equation}
i.e.\ $Q'$ is Hamiltonian with $H'\equiv H -(-1)^{|i|} c^i J_i + \half (-1)^{(|j|+1)} f^k{}_{ij} b_k c^j c^i$. Finally, 
\begin{multline}
    \half i_{Q'} i_{Q'} \omega' + Q'H' = \half i_Q i_Q \omega + QH + \\ + i_{c^iV_i}(i_Q\omega + dH) + \\ + (\half i_{c^iV_i}i_{c^iV_i}\omega + c^j c^i V_i J_j - (-1)^{|i|} \frac{1}{2}f^k{}_{ij} c^j c^i J_k ) = 0\,.
\end{multline}

\end{proof}
In what follows we refer to $(E',Q',\omega')$ as the topological deformation of $(E,Q,\omega)$ obtained by gauging the symmetries $V_i$.

Let us now discuss the resulting system in some details. If $\psi^A$ are coordinates on the fibres of $E$ then the space of sections of $E'$ can be coordinatized by
\begin{equation}
\begin{gathered}
    \sigma^*(\psi^A) = \psi^A(x,\theta)\,, \\
    \sigma^*c^i = A^i(x,\theta)\,, \qquad \sigma^* b_i = B_i(x,\theta)\,.
\end{gathered}
\end{equation}
The intrinsic action for $(E',Q', \omega')$ reads as
\begin{equation}
    S' = S[\psi] + \int_{T[1]X} \, -(-1)^{|i|}A^i J_i + B_k(\dx A^k + (-1)^{(|j|+1)}\half f^k{}_{ij} A^j A^i)\,,
\end{equation}
where $S[\psi^A]$ stands for the intrinsic action for $(E,Q,\omega)$ and 
by some abuse of notations $J_i$ denotes $\sigma^*(J_i)$. It is clear that the modification amounts to coupling the initial gauge theory to the BF-type topological system for fields $A,B$, with the coupling being $(-1)^{|i|+1}A^iJ_i$. In this sense one can indeed regard $(E',Q',\omega')$ as a topological deformation of the initial gauge system.

It is also instructive to inspect the equations of motion determined by $S'$:
\begin{equation} \label{eom-deform-top}
    \begin{gathered}
        E_A - (-1)^{|i|+|\psi^A|+|i||\psi^A|}A^i \frac{\delta J_i}{\delta \psi^A} = 0\,, \\
        d_X A^i + (-1)^{(|k|+1)}\half f^i{}_{jk} A^k A^j = 0\,, \\
        d_X B_i + (-1)^{n+|i|+|j|} f^k{}_{ij} B_k A^j -(-1)^{|i|(n+1)} J_i = 0\,,
    \end{gathered}
\end{equation}
where $E_A$ denote the Euler--Lagrange equations of $S[\psi]$. The gauge transformations for the fields $A^i, B_j$ are determined by $Q'$ and read as:
\begin{equation}
\begin{gathered}
    \delta A^i = d_X \alpha^i - f^i{}_{jk} A^k \alpha^j\,, \\
    \delta {B}_i = d_X \mu_i - (-1)^{n + |i| + |j|} f^k{}_{ij} \mu_k A^j - f^k{}_{ij} B_k \alpha^j\,,
\end{gathered}
\end{equation}
where $\alpha$ and $\mu$ are gauge parameters that are arbitrary functions on $T[1]X$ of degree $-|i|$ and $n+|i|-3$ respectively. Note that any such function can be seen as a differential form on $X$ and it identically vanishes if its degree is negative or greater than $n=\dim X$. The first two equations in \eqref{eom-deform-top} are independent of $B_i$. The second equation in \eqref{eom-deform-top} is a Maurer--Cartan equation for the $\algg$-valued (super)connection $A^i$.

Let us analyse the structure of solution to the above system of equations assuming that the base space $X$ is contractible. We first consider the case where $n>\gh{c^i}>0$ and $\gh{b_i}\geq 0$. In this case the second equation defines a flat connection or more precisely $\fZ$-graded version of Quillen superconnection~\cite{Quillen:1985vya}. Under our assumptions it always has a solution, e.g.\ $A^i=0$, and all solutions are gauge equivalent. In other words $A$ does not contribute to the moduli of solutions unless $X$ is non contractible. In its turn, the first equation of \eqref{eom-deform-top} is the equations of motion for physical fields modified by the coupling to $A^i$. Finally, on solutions the currents are covariantly closed, $D_A\sigma^*(J)=0$, where $D_A=d_X+\rho_A$ and $\rho_A$ denotes the coadjoint action of the $\algg$-valued form $A$. Indeed, the equations of motion imply $d_X\circ\sigma^*-\sigma^*\circ Q'=\sigma^*\circ k$ for some vertical $k\in\cK_\omega$. Since $kJ_i=0$ for any $k\in\cK_\omega$, as follows from \eqref{Vi-ham}, $QJ_i=0$ and the currents are equivariant, i.e.\ $(-1)^{|j|}V_jJ_i=f^k{}_{ji}J_k$, one finds $d_X\sigma^*(J_i)=\sigma^*(c^jV_jJ_i)$, which is precisely $D_A\sigma^*(J)=0$. One can show that on a contractible manifold under our conditions this implies that the third equation can be solved for $B_i$, see e.g. \cite{Arias_Abad_2019}. In the critical case $\gh{c^i} = n$, one has $\gh{J_i} = 0$ and the last equation could impose restrictions on possible values of $J_i$. 

Note that under some technical conditions, this system with $n=1$ and $\gh{J_i}=0$ describes a first-class constrained system with constraints $J_i$ or equivalently implements the Marsden--Weinstein reduction to the zero value of the moment map $E \to \algg^*$ whose coefficients are $J_i$. In both cases the moduli of solutions is roughly speaking the gauge quotient of the surface $J_i=0$.  

Another typical situation corresponds to gauging deformations and in this case $\gh{c^i}=0$ and hence $\gh{b_i}=n-1$. The second equation in \eqref{eom-deform-top} then implies $A^i=const$ and the first equation is the deformation of the starting point $E_A$ where $A^i$ are naturally interpreted as coupling constants. For $n>1$ the third equation neither brings obstructions nor contribute to the moduli of solutions. An interesting phenomenon that occurs in a more general case $\gh{c^i}\leq 0$ is that the second equation generally imposes algebraic constraints on the values of $A^i$ of vanishing degree.

To conclude, just like the non-Lagrangian case the sector of $A^i,B_i$ is topological in the sense that it does not affect local degrees of freedom but can change the moduli of solutions by a finite-dimensional factor in the way sensitive to the global spacetime geometry. In the next sections we present a few examples of field theories which are related by the topological deformations of the type described in this section.

\section{Examples of gauging deformations}\label{sec:deform-examples}

In this section we consider examples of pairs of deformation type. In the simplest case the gauged symmetry is of degree $1$, so that the corresponding ghost has degree $0$ and its momentum has degree $n-1$. The field associated with the ghost is a dynamical coupling constant, which is constant on-shell, while the $(n-1)$-form associated with its momentum enters as a Lagrange multiplier. The value of the coupling can be recovered as the on-shell value of the charge associated with the degree $-(n-1)$ symmetry $\partial/\partial b$. In the Yang--Mills example of Section~\bref{sec:abelian2YM} we also consider a much more general gauging, where the gauged symmetries form a graded Lie algebra of degrees $1$ and higher.

\subsection{Massive scalar}\label{sec:massive-scalar}

As a first illustration of gauging a deformation we consider the toy example of massless field and its massive deformation. The presymplectic gPDE describing the massless scalar is a trivial fibre bundle $E = T[1]X \times W$ over $T[1]X$, where the fibre $W$ is coordinatized by real $\phi, \phi_a$ of ghost degree $0$. The $Q$-structure is given by
\begin{equation}
\begin{gathered}
    Q \phi =\theta^{a} \phi_{a}\,, \qquad Q \phi_{a} = 0\,,
\end{gathered}
\end{equation}
and the presymplectic structure reads as
\begin{equation}
    \omega = (-1)^{(n-1)}d(\theta^{(n-1)}_a \phi^a d\phi) \,, \qquad n = \dim(X)\,,
\end{equation}
where we assume that we work in the mostly-negative signature. Here and in what follows we employ the following short-hand notation:
\begin{align}
    \theta^{(k)}_{a_1\dots a_{n-k}}\equiv\frac{1}{k!}\epsilon_{a_1\dots a_{n-k}b_1\dots b_{k}}\theta^{b_1}\dots\theta^{b_{k}}\,,
\end{align}
where $\epsilon_{0\,1\ldots \,n-1}=\epsilon^{{0\,1\ldots \,n-1}}=1$. Note that $\epsilon^{a_1\ldots a_n}$ is not obtained from $\epsilon_{a_1\ldots a_n}$ by raising indices with the metric. The Berezin integration over $\theta$ is normalised by $\int d^n\theta\, \theta^{(n)}=1$.

This presymplectic gPDE possesses a degree-$1$ symmetry 
\begin{equation}
\begin{gathered}
    V_M = -\frac{1}{n}\theta^a \phi \frac{\partial}{\partial \phi^a}\,,
\end{gathered}
\end{equation}
which obeys
\begin{equation}
[Q,V_M] \in \cK_\omega, \qquad i_{V_{M}}\omega + dJ_M \in \cI\,,
\end{equation}
with 
\begin{equation}
J_M = -\frac{1}{2}\theta^{(n)} \phi^2\,.
\end{equation}

Following the general prescription of Proposition~\bref{prop:main}, we take $E'=E\times \fR\times \fR[n-1]$ with extra coordinates $c, b$ of degrees $0, n-1$ and take the following presymplectic structure:
\begin{equation}
    \omega' = (-1)^{(n-1)}d(\theta^{(n-1)}_a \phi^a d\phi) + db\, dc\,.
\end{equation}
The $Q$ structure of the deformed presymplectic gPDE is given by:
\begin{equation}
\begin{gathered}
    Q' \phi = \theta^a \phi_a\,, \\
    Q' \phi_a = -\frac{1}{n}\theta_a c \phi\,, \\
    Q' b = \frac{(-1)^n}{2}\theta^{(n)} \phi^2 = (-1)^{(n+1)}J_{M}, \, \qquad Q'c = 0\,,
\end{gathered}
\end{equation}
and is a Hamiltonian vector field with respect to $\omega'$ with the Hamiltonian being 
\begin{equation}
    H' = -\theta^{(n)} \frac{1}{2} c \phi^2 - \frac{1}{2} \theta^{(n)} \phi^a \phi_a\,.
\end{equation}
Proposition~\bref{prop:main} implies that $(E',Q',\omega')$ is a presymplectic gauge PDE which is a topological deformation of the initial one.

Let us see what the deformed system is. Parametrising sections by
\begin{equation}
    \begin{gathered}
        \sigma^* \phi = \phi(x) \, , \qquad \sigma^* \phi^a = \phi^a(x)\,, \\
        \sigma^* b = B_{b_1...b_{n-1}}(x)\theta^{b_1} ... \theta^{b_{n-1}}\,, \qquad \sigma^* c = M(x)\,,
    \end{gathered}
\end{equation}
the intrinsic action reads as
\begin{equation} \label{action-deformed-scalar}
    S' = \int d^nx \ \phi^a \partial_a\phi - \frac{1}{2} \phi^a \phi_a - M(\frac{1}{2}\phi^2 - (-1)^n \epsilon^{a_1...a_n}\partial_{a_1} B_{a_2...a_n})\,,
\end{equation}
or, after the elimination of $\phi_a$ through its equations of motion $\phi_a = \partial_a \phi$:
\begin{equation} \label{action-deformed-scalar-2nd}
    S' = \int d^nx \ \frac{1}{2}\partial^a\phi \partial_a\phi - \frac{M}{2}\phi^2 + (-1)^n \epsilon^{a_1...a_n} M\partial_{a_1} B_{a_2...a_n}\,.
\end{equation}
Its equations of motion are given by:
\begin{equation}\label{equations-deformed-scalar}
\begin{gathered}
    (\Box + M)\phi = 0\,, \\
    \phi^2 = 2(-1)^n\epsilon^{a_1...a_n}\partial_{a_1} B_{a_2...a_n}\,, \\
    \partial_a M= 0\,.
\end{gathered}
\end{equation}

We see that in this example the field $M$ becomes a constant on the equations of motion which for $M>0$ plays the role of the square of the mass parameter. It can be recovered as a conservation law with respect to the ghost degree $-(n-1)$ symmetry $\ddl{}{b}$.

Let us finally comment on what happens if we repeat the above considerations for the Laplace equation on a compact space, e.g.\ $S^2$. The generalisation is straightforward but it turns out that the curved generalisation of the second equation in \eqref{equations-deformed-scalar} has the form
$dB \propto \phi^2$ and hence does not have global solutions unless $\phi=0$. An analogous phenomenon occurs in the global versions of the next example.

\subsection{Unimodular gravity as a topological deformation}

In this subsection we obtain the presymplectic gPDE formulation of the unimodular gravity as a topological deformation of Einstein gravity. 

We start from the formulation of gravity as presymplectic gPDE, proposed in \cite{Grigoriev:2020xec}. The corresponding $Q$-bundle is globally trivial and is given by $(E,Q)=(\mathfrak{iso}(3,1)[1],d_{CE})\times (T[1]X,\dx)$, where $\mathfrak{iso}(3,1)$ is the Poincar\'e algebra in $4$ dimensions and $d_{CE}$ its differential. For simplicity we restrict ourselves to the case $\dim{X}=4$ but the extension to generic $n>2$ is straightforward.\footnote{For simplicity we also assume that $X$ is parallelisable; otherwise $E$ has this product form only locally.} Introducing coordinates as
\begin{center}
\begin{tabular}{| c | c | c |} 
 \hline
 coordinate & $\xi^a$\,, $\rho^{ab}$\,, $\theta^\mu$ & $x^\mu$ \\ 
 \hline
 ghost & 1 & 0 \\ 
 \hline
\end{tabular}
\end{center}
where $\xi^a$ correspond to translations and $\rho^{ab}$ to Lorentz generators of $\mathfrak{iso}(3,1)$. $Q$-vector is determined by 
\begin{equation}
\begin{gathered}
     \quad Q \xi^a = -\rho^a{}_b \xi^b, \quad Q\rho^{ab} = -\rho^{ac} \rho_c{}^b \,, \quad Qx^\mu=\theta^\mu\,,
\end{gathered}
\end{equation}
and the presymplectic structure is given by~\cite{Alkalaev:2013hta} 
\begin{equation}\label{presymplect_unimod}
    \omega = d\chi= -d \left(\frac{1}{32\pi}\epsilon_{abcd} \xi^a \xi^b d\rho^{cd} \right)\,.
\end{equation}
It is easy to check that the compatibility conditions are satisfied so that we have a presymplectic gauge PDE and that it encodes the proper BV formulation of Einstein gravity.

The above presymplectic gPDE possesses the following degree $1$ symmetry:
\begin{equation}
\begin{gathered}
    Y \rho^{ab}= - \frac{1}{3}\xi^a \xi^b\,,
\end{gathered}
\end{equation}
obeying $[Q,Y] = 0$, $[Y,Y] = 0$. Its associated conserved current is given by:
\begin{equation}
    J_Y = -\frac{1}{8\pi}\xi^{(4)}\,, \qquad 
    i_Y \omega + dJ_Y = 0\,.
\end{equation}
Applying the gauging procedure of Proposition~\bref{prop:main} we extend $E$ by coordinates $c$ and $b$ of degree $0$ and $3$ respectively. The action of differential $Q'$ is given by:
\begin{equation}
\begin{gathered}
    Q'b = \frac{1}{8\pi}\xi^{(4)} = -J_Y\,, \\
    Q'c = 0\,,
\end{gathered}
\end{equation}
\begin{equation}
\begin{gathered}
    Q' \xi^a = -\rho^a{}_b \xi^b\,, \\
    Q'\rho^{ab} = -\rho^{ac} \rho_c{}^b - \frac{1}{3} c\, \xi^a \xi^b\,,
\end{gathered}
\end{equation}
and the extended presymplectic structure reads as:
\begin{equation}\label{presymplect_unimod-deformed}
    \omega' = -\frac{1}{32\pi} d(\epsilon_{abcd} \xi^a \xi^b d\rho^{cd}) +db\, dc\,.
\end{equation}
Parametrising the space of sections by

\begin{equation}
\begin{gathered}
    \sigma^*(\xi^a) = e^a_{\mu} \theta^\mu\,, \\
    \sigma^*(\rho^{ab}) = \omega^{ab}_{\mu} \theta^\mu\,, \\ \sigma^*(c) = \Lambda\,, \qquad \sigma^*(b) = \frac{1}{3!}B_{\mu\nu\rho}\theta^\mu \theta^\nu\theta^\rho\,,
\end{gathered}
\end{equation}
the associated intrinsic action takes the form:
\begin{equation}
    S' = \int_X -\frac{1}{16\pi}e^{(2)}_{ab}(d \omega^{ab} +\omega^{a}_c \omega^{cb}) - \frac{1}{8\pi} \Lambda(e^{(4)}-8\pi d B)\,.
\end{equation}
This action is a frame-like version of the action proposed by Henneaux in~\cite{Henneaux:1989zc}. Let us also mention somewhat related analysis of unimodular gravity from~\cite{Kaparulin:2019quz}.

Equations of motion are:
\begin{equation}
\begin{gathered}
    \frac{\delta S'}{\delta \omega} = 0 \leftrightarrow de^a + \omega^{ab}e_b = 0\,, \\
    \frac{\delta S'}{\delta e} = 0 \leftrightarrow \epsilon_{abcd}e^b(R^{cd} + \frac{\Lambda}{3}e^ce^d) = 0\,, \\
    \frac{\delta S'}{\delta B} = 0 \leftrightarrow d \Lambda = 0\,, \\
    \frac{\delta S'}{\delta \Lambda} = 0 \leftrightarrow \ e^{(4)} - 8\pi dB =0\,,
\end{gathered}
\end{equation}
where $e_\mu^a e_a^\nu=\delta_\mu^\nu$ and 
\begin{equation}
\begin{gathered}
    R^{ab} = d\omega^{ab} +\omega^{ac} \omega_c{}^b = \frac{1}{2}R^{ab}_{|\mu\nu} \theta^\mu \theta^\nu\,.
\end{gathered}
\end{equation}
The first equation can be solved for $\omega^{ab}_\mu = \omega^{ab}_\mu(e)$ stating that it is the Levi-Civita spin connection. The third equation requires $\Lambda$ to be constant. The second equation is then just the Einstein equation written in terms of frame variables with cosmological constant $\Lambda$. The last equation can be locally solved for $B$ up to its gauge freedom with no restrictions on the gravitational variables. Globally however, if $B$ is understood as a globally defined form, this equation states that the volume form is exact, which may restrict the space of solutions. 

Finally, note that $\Lambda$ is an on-shell conserved quantity that can be recovered as a conserved current associated to a ghost degree $-3$ symmetry $\ddl{}{b}$. This was observed in~\cite{Elfimov:2021aok} within a related framework.

\subsection{Yang--Mills theory as a topological deformation}\label{sec:abelian2YM}

Our next example is the classic consistent deformation of $N$ copies of Maxwell theory into the nonabelian Yang--Mills theory, see~\cite{Barnich:1993vg,Barnich:1993pa,Barnich:1994mt} for the local BRST cohomology approach. We use the minimal presymplectic gPDE formulation of Maxwell theory~\cite{Grigoriev:2022zlq}, see also~\cite{Dneprov:2022jyn,Dneprov:2024cvt}, with $N$ copies of ghosts denoted by $C^i$ and $N$ copies of the Faraday tensor, denoted by $F^i_{ab}$, $i=1,\ldots N$. The underlying internal space is isomorphic to $\fR^N$ and denoted by $W$ so that the fibre of this presymplectic gPDE is given by $W[1]\times (W \tensor (\fR^4 \wedge \fR^4))$. The $Q$-structure and the presymplectic structure are given explicitly by:
\begin{equation}
\begin{gathered}
    QC^i = \frac{1}{2}F_{ab}^i \theta^a \theta^b\,, \qquad QF^i_{ab}=0\,, \\
    \omega = d\chi\,, \qquad \chi =-\frac{1}{2}\theta^{(2)}_{ab}F^{i,ab}dC^j \delta_{ij}\,.
\end{gathered}
\end{equation}
In what follows we simply take $\delta_{ij}$ to be a unit matrix.

The consistent deformations for this system in the presymplectic BV-AKSZ approach were discussed in~\cite{Frias:2024vgw} to which we refer for a detailed discussion. Now we only consider the following concrete deformation leading to YM theory:
\begin{equation}
\begin{gathered}
    VC^k = -\frac{1}{2}f^k{}_{ij}C^iC^j\,, \\
    V F_{ab}^k = f^k{}_{ij} F_{ab}^{[i}C^{j]}\,,
\end{gathered}
\end{equation} 
where $f^k{}_{ij}$ are components of a totally antisymmetric constant tensor so that $V$ is Hamiltonian with respect to the above presymplectic structure. The additional conditions 
\begin{equation}
[Q,V]\in \cK_\omega\,, \qquad [V,V] = 0\,,
\end{equation}
ensure that $V$ realises the action of a symmetry algebra (in this case a 1-dimensional degree $1$ Lie algebra) via the vector field $V$. These are obviously satisfied if the map $W\wedge W \to W$ determined by $f^k{}_{ij}$ is a Lie algebra structure on $W$. Note that total antisymmetry of $f^k{}_{ij}$ means that the Euclidean inner product on $\fR^N$ is invariant. The current associated to $V$ reads explicitly as $J_V=-\frac14 \theta^{(2)}_{ab}F^{l,ab} f^k{}_{ij}C^iC^j\delta_{lk}$ 

It is straightforward to apply the gauging procedure to the above symmetry, giving an extended presymplectic gPDE with additional coordinates $\gamma,\gh{\gamma}=0$ and $\beta, \gh{\beta}=3$. The meaning of the field $\sigma^*(\gamma)$ is obvious -- this is a dynamical coupling constant which is subject to $d \sigma^*(\gamma)=0$. Let us also present explicitly the expression for the intrinsic action. To this end introduce the following parametrization of sections:
\begin{equation}
\sigma^*(C^i)=A^i \,, \qquad \sigma^*(F^i_{ab})=F^i_{ab}\,, \qquad 
\sigma^*(\gamma)=\lambda\,, \qquad \sigma^*(\beta)=B\,,
\end{equation}
where $\lambda$ is a $0$-form and $B$ is a $3$-form. The intrinsic action then reads as: 
\begin{equation}
\label{deformedYM}
S'=\int d^4x\Big(-\half F^{ab}_i\big(\partial_aA^i_b-\partial_bA^i_a
+\lambda f^i{}_{jk}A^j_aA^k_b\big)+\frac14 F^{ab}_iF_{ab}^i
-\lambda\,\partial_a B^a\Big)\,,
\end{equation}
where $B^a=-\frac{1}{3!}\epsilon^{abcd}B_{bcd}$. Varying $B^a$ gives
$\partial_a\lambda=0$, while for constant $\lambda$ the remaining equations for $A^i,F^i$ are those of Yang--Mills theory with coupling constant $\lambda$. Varying $\lambda$ gives $\sigma^*(J)+dB=0$ which can be solved with respect to $B$, at least locally.

In fact this example can be employed to illustrate a much more general class of gaugings. To see this let us recall that the inner product $\delta_{ij}$
on $W$ defines a symplectic structure of degree $2$ on $W[1]$ and 
consider the Lie algebra $\algg$ of Hamiltonian vector fields on $W[1]$ of homogeneity degree $1$ and higher. It naturally acts on $W[1]$ seen as a graded manifold and, moreover, its action extends to $W[1]\times W$, which in turn naturally extends to $W[1]\times (W \tensor (\fR^4 \wedge \fR^4))$.
Indeed, the action on $W[1]$ is canonical and its shifted tangent lift to $T[-1](W[1])$ is canonical as well while $W[1]\times W$ can be identified with $T[-1](W[1])$. Because this action is fibrewise linear, it extends to the action on $W[1]\times (W \tensor (\fR^4 \wedge \fR^4))$. In plain terms this means that the action does not affect spacetime indices of $F^i_{ab}$. By extending this action to the vertical action on $E$ one ends up with the $\algg$-action on $E$. 

In order to see that $\algg$-action on $E$ is compatible with $Q$ and $\omega$ let us first give useful explicit formulas for it. To this end it is convenient to introduce the following vector field $X_{ab}=F^i_{ab}\dl{C^i}$ so that
$Q=\dx+\theta^{ab}X_{ab}$, where $\theta^{ab}=\half \theta^a \theta^b$. The canonical lift $\bar u$ of $u\in \algg$ defined on $W[1]$ to $W[1]\times (W \tensor (\fR^4 \wedge \fR^4))$ can be written as follows
\begin{equation}
\bar u C^i\equiv uC^i\,, \qquad \bar u F^i_{ab} \equiv (-1)^{\p{u}}X_{ab}uC^i\,.
\end{equation}
It follows $[X_{ab},\bar u]C^i=0$ and hence $[Q,\bar u]C^i=0$ as well. Then it is easy to check that $i_{[Q,\bar u]}\omega=0$ so that $[Q,\bar u]\in \cK_{\omega}$.

To see that $\bar u$ is Hamiltonian let us observe that the presymplectic structure can be written as 
\begin{equation}
\omega = \half \theta^{(2)}_{ab} dF^{i,ab} dC^j \delta_{ij}=
-\frac{1}{4} \left(\theta^{(2)}_{ab} L_{X^{ab}} 
(dC^idC^j\delta_{ij})\right)\,,
\end{equation}
where all the equalities are understood modulo terms proportional to $d\theta^a$, i.e.\ terms from $\cI$. It follows that
\begin{multline}
L_{\bar u}\omega=-\frac{1}{4} \theta^{(2)}_{ab} L_{\bar u} L_{X^{ab}} \left( dC^idC^j\delta_{ij} \right)=
-(-1)^{\p{u}}\frac{1}{4}\theta^{(2)}_{ab}L_{X^{ab}} L_{\bar u} (dC^idC^j\delta_{ij})=\\
-(-1)^{\p{u}}\frac{1}{4}\theta^{(2)}_{ab} L_{X^{ab}} L_{u} (dC^idC^j\delta_{ij})=0\,,
\end{multline}
where the last equality follows from the condition that $u$ is Hamiltonian with respect to $dC^idC^j\delta_{ij}$.  

The remaining condition of Proposition~\bref{prop:main} is that the algebra of Hamiltonians for the $\algg$-action on $E$ with respect to $\omega$ is not centrally extended. First of all, the algebra of Hamiltonians of $\algg$-action on $W[1]$ with respect to presymplectic structure $dC^idC^j \delta_{ij}$ is not centrally extended if one assumes Hamiltonians to be $x,\theta$-independent. Furthermore, if $H_u$ is a Hamiltonian of $u\in \algg$ with respect to $dC^idC^j \delta_{ij}$ then the Hamiltonian of its lift $\bar u$ can be written as $\frac14\theta^{(2)}_{ab} {X^{ab}}H_u$ which in turn implies that
\begin{equation}
\bar u \theta^{(2)}_{ab} {X^{ab}}H_v=(-1)^{\p{u}} \theta^{(2)}_{ab} {X^{ab}} uH_v= (-1)^{\p{u}} \theta^{(2)}_{ab} {X^{ab}}\pb{H_u}{H_v}\,,
\end{equation}
so that the classical anomaly vanishes. Following the same argument, one can easily see that $QJ_{\bar u} = 0$ if we choose Hamiltonians to be independent of $\theta, x$.

Let $V_{k i_1\ldots i_l}\equiv C_{[i_1}\ldots C_{i_l}\dl{C^{k]}}$, $2\leq l\leq N-1$, be a basis of $\algg$, where here and below the indices of $W$ are raised and lowered with $\delta_{ij}$ and we do not distinguish their position. Note that $V_{k i_1\ldots i_l}$ are totally antisymmetric, reflecting the fact that $\algg$ is isomorphic (up to a degree shift) to the Lie algebra of functions on $W[1]$ of homogeneity at least
$3$, with the bracket given by the Poisson bracket.
Our gauging procedure amounts to introducing ghosts $\gamma^{k i_1\ldots i_l}$ and their conjugate momenta $\beta_{k i_1\ldots i_l}$, with $\gh{\gamma^{k i_1\ldots i_l}}=2-l$ and $\gh{\beta_{k i_1\ldots i_l}}=3-\gh{\gamma^{k i_1\ldots i_l}}=l+1$, and taking the total $Q$-structure to be 
\begin{multline}
Q'=Q+ \gamma^{I} \bar V_{I}+(-1)^{|K|}\half U^{I}{}_{LK}\gamma^{K} \gamma^{L}\dl{\gamma^{I}}
+
\\
\big((-1)^{|I|}J_{I}-(-1)^{|K|+|I|}U^{L}{}_{IK}\beta_{L}\gamma^{K}\big)\dl{\beta_{I}}\,,
\end{multline}
where $I=(k i_1\ldots i_l)$ is a multi-index labelling the basis elements of $\algg$, $|I|=l-1$ is the degree of $V_I$, $U^{I}{}_{KL}$ are the structure constants of $\algg$ in this basis, $\bar V_{I}$ is the lift of $V_{I}$ to $E$, and $J_{I}$ is the Hamiltonian of $\bar V_{I}$ with respect to $\omega$.

Let us inspect the equations of motion encoded in the deformed system. Because the presymplectic structure is nondegenerate in the sector of $\gamma,\beta$ variables the equations of motion in this sector can be written as
\begin{equation}
\dx\sigma^*(\gamma^I)=\sigma^*(Q'\gamma^I)\,, \qquad
\dx\sigma^*(\beta_I)=\sigma^*(Q'\beta_I)\,.
\end{equation}
In particular, it is instructive to look at the equations associated with the ghosts of degree $0$ and $-1$, which in the basis chosen above are the totally antisymmetric $\gamma^{kij}$ and $\gamma^{kijl}$, respectively. By degree reasoning $\sigma^*(\gamma^{kijl})=0$, while $Q'\gamma^{kij}=0$ because
$\algg$ has no elements of degree lower than $1$. The equations then take the form
\begin{equation}\label{2cond}
\dx\,\sigma^*(\gamma^{kij})=0\,, \qquad
\sigma^*\big(\gamma^{mij}\gamma^{lmk}+\text{cycl.}(ijk)\big)=0\,,
\end{equation}
which state that $\sigma^*(\gamma^{kij})$ are constants defining a Lie algebra structure on $W$ for which $\delta_{ij}$ is an invariant inner product. It is remarkable that the structure (coupling) constants enter this formulation as dynamical fields which are constant on-shell and satisfy the Jacobi identity as a consequence of the equations of motion. At the level of the intrinsic action,
the first equation in~\eqref{2cond} arises from varying the 3-forms
$\beta_{kij}$, while the second one arises from varying the 4-forms
$\beta_{kijl}$, which thus play the role of Lagrange multipliers imposing the Jacobi identity. Indeed, the intrinsic action for $(E',Q',\omega')$ reads as
\begin{equation}
S'=S_0[A,F]+\int_X\Big(\gamma^{kij}\big(\sigma^*(J_{kij})+d\beta_{kij}\big)
+\gamma^{mij}\gamma^{mkl}\,\beta_{ijkl}\Big)\,,
\end{equation}
where $S_0[A,F]$ is the action of $N$ copies of Maxwell theory and we keep the same notation for the fields $\sigma^*(\gamma^{kij})$ (0-forms),
$\sigma^*(\beta_{kij})$ (3-forms) and $\sigma^*(\beta_{kijl})$ (4-forms). Setting
$\gamma^{kij}=\lambda f^k{}_{ij}$ recovers action~\eqref{deformedYM}. Note that $\beta_{i_1\ldots i_m}$ has degree $n-4+m$, so that it is only because $\dim X=4$ that $\sigma^*(\beta_{i_1\ldots i_m})$ is an $m$-form.

The resulting system admits a simple interpretation. Its solutions in the sector of degree-$0$ ghosts are Maurer--Cartan elements of $\algg$. Each of them defines a consistent deformation of $N$ copies of Maxwell theory into the Yang--Mills theory with structure constants $\gamma^{kij}$, and the remaining equations of motion are precisely those of this Yang--Mills theory. In other words, the gauged system is a single local gauge theory which contains the original fields together with all consistent deformations of the above type, the latter entering as global (topological) moduli whose consistency conditions are imposed by the equations of motion. Note that $\algg$ has no elements of degree $0$, so that structure constants related by $O(N)$ rotations are not identified. One can also include quadratic Hamiltonians, i.e.\ gauge the $\mathfrak{o}(N)$ symmetry as well, in which case the moduli are expected to become Lie algebra structures compatible
with $\delta_{ij}$, modulo isomorphism. The only reason for not doing this explicitly in the above consideration is to keep the exposition light.

The above system is closely related to the construction introduced
in~\cite{Grigoriev:2006tt}, see also~\cite{Bekaert:2013zya,Grigoriev:2021bes} for a more refined exposition, where a ghost field is associated with every element of the graded Lie algebra underlying the BFV or BV formulation of a gauge system, and the differential is a generalised Chevalley--Eilenberg one. The resulting $Q$-manifold describes a local gauge theory whose solutions are consistent BRST charges or operators. For instance, starting from a quantum BFV system, the associated gauge theory can be regarded as a field theory of
consistent backgrounds on which the initial system can propagate. A conceptually similar interpretation of string field theory was put forward already in~\cite{Horowitz:1986dta}. The difference with the present construction is that here the sector describing deformations is coupled to the initial system rather than replacing it: the gauged system contains both the original fields and the deformation parameters, and on each solution the original fields obey the equations of the deformed theory selected by the values of these parameters, which are constant on-shell, as befits coupling
constants.

A procedure of a similar nature is the extended antifield formalism
of~\cite{Barnich:1998qz,Barnich:2000me}, which extends the earlier treatment of rigid symmetries in~\cite{Brandt:1997cz} so as to cover deformations and anomalies. There, all local BRST cohomology classes, in particular all consistent deformations of the action, are coupled to the BV master action through constant sources of the appropriate ghost numbers, and the resulting extended action satisfies an extended master equation encoding the algebraic structure of these classes. The gauging of deformations considered in this work can be seen as a local counterpart of this extension: the coupling constants are promoted to fields which, together with their conjugate polymomenta, form a topological sector of a local gauge theory, while their constancy and algebraic constraints such as the Jacobi identity in the above example follow from the equations of motion.

\section{Dualities via topological deformations}\label{sec:dualities}

From the physics perspective, dual field theories are theories that describe the same local degrees of freedom in terms of apparently different (gauge) fields. Typically, dual theories are obtained by using the spacetime Poincar\'e lemma and hence the dual formulations generally differ by some global degrees of freedom which, in turn, are sensitive to spacetime topology. Classical examples include the electric--magnetic and $p$-form dualities~\cite{Deser:1976iy,Henneaux:1986ht}, see also~\cite{Fradkin:1984ai,Freedman:1980us}, and~\cite{Curtright:1980yk,Hull:2001iu,Bekaert:2002dt,Boulanger:2003vs,Henneaux:2004jw,Matveev:2004ac,Boulanger:2012df,Boulanger:2012mq} for more general gauge fields. It turns out that the topological deformation construction presented in Section~\bref{sec:gauging} is well suited to generate dual formulations of a given system. In the terminology of the introduction these are pairs of duality type: the gauged symmetries have nonpositive degree, and the associated momenta give rise to the dual gauge fields.

\subsection{Massless scalar -- $B$-field duality}
\label{sec:scalar-2form-duality}
The standard way to observe duality between the scalar field and 2-form gauge field in 4 dimensions is as follows. Consider the first order action of scalar field:

\begin{equation}
\begin{gathered}
       S = \int d^4x \, \phi_a \partial^a \phi - \frac{1}{2} \phi_a \phi^a\,,
\end{gathered}
\end{equation}
and perform a field redefinition by $F_{abc} = \epsilon_{abcd}\phi^d$, or in reverse $\phi_a = \frac{1}{3!}\epsilon_{abcd}F^{bcd}$. The action takes the form
\begin{equation}
    S = \int d^4x \, \frac{1}{12} F_{abc} F^{abc} - \frac{1}{3!}\phi \epsilon_{abcd} \partial^a F^{bcd}\,,
\end{equation}
and the equation of motion for $\phi$ reads:
\begin{equation}
\begin{gathered}
    \frac{\delta S}{\delta \phi} = -\frac{1}{3!} \epsilon_{abcd} \partial^a F^{bcd} = 0\,.
\end{gathered}
\end{equation}
If de Rham cohomology of $X$ is trivial in degree $3$, this equation has a unique solution modulo gauge equivalence: $F_{abc} = G_{abc} \equiv 3\partial_{[a} B_{bc]}$. The initial action expressed in terms of $B$ reads as
\begin{equation}
\begin{gathered}
    S = \frac{1}{12}\int d^4x \, G_{abc}G^{abc}\,,
    \label{B_action}
\end{gathered}
\end{equation}
and is precisely the standard action for the 2-form gauge field. Let us stress that this "derivation" is not a chain of local equivalences as the passage from $F$ to $B$ is strictly speaking a nonlocal operation. Nevertheless, by inspecting the equations of motion and gauge equivalence it is easy to see that the space of gauge inequivalent solutions may differ by a finite dimensional factor only.
As we are going to see next, the difference becomes apparent if the same steps are reformulated as gauging the global symmetry of the massless field.

The minimal presymplectic gPDE $(E,Q,\omega)$ describing the free massless scalar field was already described in Section~\bref{sec:massive-scalar} and we keep using the same notations for the case $\dim{X}=4$. This presymplectic gPDE has the following degree-$0$ symmetry:
\begin{equation}
\begin{gathered}
     Y = -\frac{\partial}{\partial \phi}, \qquad \gh{Y} = 0\,,
\end{gathered}
\end{equation}
which clearly satisfies the conditions of Proposition~\bref{prop:main} and hence defines a topological deformation. It is given by a new gPDE $(E',Q',\omega')$, where $E' = E \times T^*[3]\mathbb{R}[1]$ with the second factor coordinatized by $c, b$ of ghost degrees $1$ and $2$ respectively. The $Q'$ and $\omega'$ on $E'$ are given by: 
\begin{equation} \label{deformed-scalar}
\begin{gathered}
    Q' \phi = \theta^a \phi_a - c\,, \qquad Q' b = -\theta^{(3)}_a \phi^a\,, \\
    \omega' = -d(\theta^{(3)}_a \phi^a d\phi) +dbdc\,,
\end{gathered}
\end{equation}
and the corresponding deformed Hamiltonian reads as
\begin{equation}
    H' = -\frac{1}{2}\theta^{(4)} \phi^a\phi_a + c \theta^{(3)}_a \phi^a\,.
\end{equation}

Let us analyse what is the field theory the deformed gPDE just constructed describes. The corresponding intrinsic action reads as: 
\begin{equation} \label{gauged-scalar-act}
    S' = \int d^4x \, \phi^a \partial_a \phi - \frac{1}{2}\phi^a\phi_a + A_a(\phi^a +\frac{1}{2}\epsilon^{abcd}\partial_b B_{cd})\,,
\end{equation}
where the sections are parametrised as $\sigma^*(c) = A_a(x) \theta^a\,,$ $\sigma^*(b) =\frac{1}{2}B_{ab}(x)\theta^a \theta^b$ and $\phi(x), \phi_a(x)$ by abuse of notation denote the pullback of the corresponding coordinates by $\sigma$. The only nontrivial gauge transformations are the usual gauge symmetry for 2-form gauge field and the shift gauge symmetry for $\phi$. The above action describes a 2-form gauge field. Indeed, the equivalence can be seen by eliminating the auxiliary field $\phi^a$ and setting $\phi = 0$ by its shift gauge transformation, giving the usual first order action of a 2-form gauge theory. One can also check the equivalence at the level of the associated BV formulations, see Appendix~\bref{app:direc-comp} for details.

To observe the duality directly at the level of presymplectic gauge PDEs, one can eliminate the contractible pair $\phi, c$:
\begin{equation}
\begin{gathered}
    Q' \phi =\theta^a \phi_a - c \Rightarrow \phi = 0, \ c = \theta^a \phi_a\,.
\end{gathered}
\end{equation}
Upon renaming $F_{abc} = \epsilon_{abcd}\phi^d$ this reduction gives the minimal presymplectic formulation for the $B$-field:
\begin{equation}
\begin{gathered}
     Q_{\rm red} F_{abc} = 0\,, \\
     Q_{\rm red} b = \frac{1}{3!} F_{abc}\theta^a \theta^b \theta^c\,,
\end{gathered}
\end{equation}
where $Q_{\rm red}$ denotes the restriction of $Q'$ to the subbundle $E_{\rm red}\subset E'$ singled out by the above conditions. The pullback $\omega_{\rm red}$ of $\omega'$ to $E_{\rm red}$ takes the form 
\begin{equation} \label{def-scal-reduced}
\begin{gathered}
     \omega_{\rm red} = -\frac{1}{3!} d (\theta^{(1)}_{abc}F^{abc} db)\,.
\end{gathered}
\end{equation}
The intrinsic action associated to \eqref{def-scal-reduced} is given by 
\begin{equation}
    S[F_{abc}, B_{ab}] = \int d^4x \Big( \frac{1}{6}F^{abc}G_{abc} - \frac{1}{12} F_{abc} F^{abc}\Big)\,,
\end{equation}
where $\sigma^* b = \frac{1}{2}B_{ab}\theta^a \theta^b$. It is important to note, however, that the above short-cut argument is not a proof because it is not guaranteed that elimination of a contractible pair of coordinates results in the strict equivalence of the corresponding presymplectic gPDEs. Nevertheless, in the case at hand the equivalence is obvious and the argument can be made rigorous by passing to the corresponding BV formulations, see Appendix~\bref{app:direc-comp}.

Let us mention that it is possible to reverse the procedure and obtain the scalar from the B-field by gauging $Y = -\frac{\partial}{\partial b}$ and eliminating contractible pairs. Note also that action~\eqref{gauged-scalar-act} has recently been considered in~\cite{Witten:2026twr} in a more general setting, where $\phi$ and $B$ take values in a generally nontrivial $U(1)$ bundle, and the quantum version of the duality was analysed.

\subsection{$p$-form duality}
The example of the previous subsection
has a straightforward extension to $p$-form gauge theories.
The minimal presymplectic gPDE describing $p$-form gauge field in $n$ dimensions is coordinatised by
\begin{center}
\begin{tabular}{| c | c | c |} 
 \hline
 coordinate & $F^{(p)}_{a_1 \dots a_{p+1}}$ & $C$\\ 
 \hline
 ghost & 0 & p \\ 
 \hline
\end{tabular}
\end{center}
along with the base space coordinates $x^a,\theta^a$. It is convenient to introduce the following notations:
\begin{equation}
\begin{gathered}
    F^{(p)} = \frac{1}{(p+1)!}F^{(p)}_{a_1 \dots a_{p+1}}\theta^{a_1}\dots\theta^{a_{p+1}}\,,
\end{gathered}
\end{equation}
and
\begin{equation}
\begin{gathered}
    \star F^{(p)} = \frac{1}{(n-p-1)!(p+1)!} \epsilon_{a_1 \dots a_{p+1}b_{1} \dots b_{n-p-1}}F^{(p)a_1 \dots a_{p + 1}} \ \theta^{b_{1}} \dots \theta^{b_{n-p-1}}\,.
\end{gathered}
\end{equation}
The $Q$-structure reads as
\begin{equation}
\begin{gathered}
    QC = F^{(p)}\,, \\
    QF^{(p)}_{a_1 \dots a_{p+1}} = 0\,,
\end{gathered}
\end{equation}
and presymplectic structure has the following form:
\begin{equation}
\begin{gathered}
    \omega = (-1)^{n(p+1)+1}d(\star F^{(p)} dC)\,.
\end{gathered}
\end{equation}
For the following parametrization of sections $\sigma^*F^{(p)}_{a_1\dots a_{p+1}}= F^{(p)}_{a_1\dots a_{p+1}}(x), \, \sigma^*C = \frac{1}{p!}A_{a_1 \dots a_p} \theta^{a_1} \dots \theta^{a_p}$, and defining the curvature $G_{a_1\dots a_{p+1}} = (p+1)\partial_{[a_1}A_{a_2\dots a_{p+1}]}$, we get the following intrinsic action:
\begin{equation}
    S(F^{(p)}, A^{(p)}) = \frac{1}{(p+1)!}\int d^nx \, \Big((-1)^p F^{a_1 \dots a_{p+1}} G_{a_1 \dots a_{p+1}} -\frac{1}{2}F_{a_1 \dots a_{p+1}}F^{a_1 \dots a_{p+1}}\Big)\,.
\end{equation}
See \cite{Dneprov:2024cvt} for further details.

The above presymplectic gPDE has an obvious degree $-p$ symmetry $V = -\frac{\partial}{\partial C}$ obeying 
\begin{equation}
\begin{gathered}
    [Q,V] = 0, \, L_V \omega \in \mathcal{I}\,.
\end{gathered}
\end{equation}
The corresponding conserved current takes the form:
\begin{equation}
\begin{gathered}
    J_V = - \star F^{(p)}\,.
\end{gathered}
\end{equation}

Straightforward application of the gauging procedure described by Proposition~\bref{prop:main} gives the topological deformation $(E',Q',\omega')$ where $E'$ is extension of $E$ by new fibre coordinates $c, b$ of degrees $p+1, \ n-p-2$. The modified $Q$-structure reads
\begin{equation}
\begin{gathered}
    Q'C =F^{(p)} - c\,, \\
    Q'F^{(p)}_{a_1 \dots a_{p+1}} = 0\,, \\
    Q' b = - (-1)^{p(n+1)}\star F^{(p)}\,,
\end{gathered}
\end{equation}
and the presymplectic structure is given by: 
\begin{equation}
\begin{gathered}
    \omega' = (-1)^{n(p+1)+1}d(\star F^{(p)} dC) + db dc\,.
\end{gathered}
\end{equation}

To analyse the resulting deformed system one could explicitly consider the corresponding action and gauge transformations determined by $(E',Q',\omega')$ and find that it is equivalent to the theory of the $(n-p-2)$-form. However, just like in the example of the previous subsection there is a very efficient shortcut which is to perform the reduction at the level of presymplectic gPDE.
More precisely, eliminating the contractible pair formed by $C$ and $Q'C\equiv F^{(p)}- c$ one arrives at the subbundle $E_{\rm red}\subset E'$ singled out by 
\begin{equation}
\begin{gathered}
    C = 0, \ c = F^{(p)}\,.
\end{gathered}
\end{equation}
The restriction $Q_{\rm red}$ of $Q'$ and the pullback $\omega_{\rm red}$ of $\omega'$ to this subbundle give all the structure of presymplectic gPDE which can be shown to be equivalent to the starting point one. Again to give a rigorous proof of equivalence one should resort to the corresponding BV formulation. 

To analyse the reduced system, let us introduce new coordinates $F^{(n-p-2)}_{a_1\cdots a_{n-p-1}}$ via
\begin{equation}
\begin{gathered}
   F^{(n-p-2)}\equiv - (-1)^{p(n+1)} \star F^{(p)}\,.
\end{gathered}
\end{equation}
Rewriting $Q_{\rm red}$ and $\omega_{\rm red}$ in terms of new coordinates one finds:
\begin{equation}
\begin{gathered}
    Q_{\rm red}F^{(n-p-2)}_{a_1 \dots a_{n-p-1}} = 0\,, \\
    Q_{\rm red} b = F^{(n-p-2)}\,,
\end{gathered}
\end{equation}
and 
\begin{equation}
\begin{gathered}
    \omega_{\rm red} = (-1)^{n(r+1)+1} d(\star F^{(n-p-2)} db)\,,
\end{gathered}
\end{equation}
where $r=n-p-2$. This is precisely the minimal presymplectic gPDE for the $r$-form gauge field.

Let us finally note that our approach to $p$-form dualities shares some similarity with the frame-like description of higher-spin gauge fields and their dualities~\cite{Boulanger:2012mq}, see also~\cite{Matveev:2004ac}. Another somewhat related approach worth mentioning is that of~\cite{Abakumova:2023wve}.

\subsection{Freedman--Townsend -- PCM duality}

Let us recall basic facts about the principal chiral model and its duality to Freedman--Townsend theory. The Principal Chiral Model (PCM for short) is defined in the following way:
Let $g:X\rightarrow G$ be a map from space-time manifold $X$ to a compact matrix Lie group. Then PCM is defined by the following action functional:
\begin{multline} 
    S =\frac{1}{2} \tr\int d^n x \, (g^{-1}\partial_a g g^{-1}\partial^a g) =\frac{1}{2} \tr\int d^n x \, j^a j_a =\\ \frac{1}{2}\tr\int d^n x \ \ \bar j^a \bar j_a\,,
    \label{PCM}
\end{multline}
where for simplicity we took the spacetime to be the Minkowski space. The currents are defined as follows:
\begin{equation} \label{lr_currents}
\begin{gathered}
    j^a = g^{-1}\partial^a g\,, \qquad \bar j^a = \partial^a g\, g^{-1}\,.
\end{gathered} 
\end{equation}
The equations of motion take the form
\begin{equation}
\begin{gathered}
    \partial_{a}(g^{-1} \partial^a g) = \partial_a j^a = \partial_a \bar j^a = 0\,,
\end{gathered} 
\end{equation}
and the model possesses the following global $G \times G$ symmetry:
\begin{equation}
\begin{gathered}
    g \rightarrow h_L g h_R, \, (h_L,h_R)\in G \times G\,.
\end{gathered} 
\end{equation}
The currents \eqref{lr_currents} are the corresponding Noether currents: $j^a$ is left-invariant and is associated with the right action $g\mapsto gh_R$, while $\bar j^a$ is right-invariant and is associated with the left action $g\mapsto h_Lg$.

The action of the Freedman--Townsend (FT) theory \cite{Freedman:1980us} is given by 
\begin{equation}
    S[B,A] = \int_X \tr\Big( B \wedge F(A) + \frac{1}{2}A\wedge * A \Big)\,,
    \label{freedman_town_first}
\end{equation}
where $X$ is a 4-dimensional spacetime, $B$ and $A$ are $2$- and $1$-forms on $X$ with values in the (matrix) Lie algebra $\algg$, and $\tr$ is an invariant nondegenerate trace on $\algg$. 

The usual derivation of the duality between FT and PCM proceeds as follows. The equations of motion for $B$ imply that the curvature $F(A)\equiv dA+A^2$ vanishes. For simply connected $X$ this implies that $A$ can be solved via:
\begin{equation}
    \begin{gathered}
        A = g^{-1}dg\,.
    \end{gathered}
\end{equation}
By substituting this form of $A$ in the action \eqref{freedman_town_first}, one obtains the action \eqref{PCM}.

The PCM--FT duality is the nonabelian version of the massless scalar -- 2-form gauge field duality discussed in Section~\bref{sec:scalar-2form-duality}. The natural question is whether this duality can be systematically obtained via gauging a global symmetry of the underlying presymplectic gPDE. To answer this we first need a concise presymplectic BV-AKSZ formulation of PCM.

\subsubsection{Presymplectic formulation for PCM}

Restricting ourselves to the case where $X$ is the $n$-dimensional Minkowski space and working in Cartesian coordinates $x^a$ the minimal presymplectic gPDE formulation of PCM is constructed as follows: the fibre $W$ of $E=T[1]X\times W$ is given by $W=G\times (\algg \tensor \fR^n)$, where $\fR^n$ is considered as a cotangent space
at any point of $X$. We use matrix-valued $g$ as coordinates on the group and $\algg$-valued coordinates $P_a$ on $\algg \tensor \fR^n$, which we also treat as matrices. Here $W$ is identified with $TG\tensor\fR^n$ via the left-invariant Maurer--Cartan form $g^{-1}dg$. The $Q$-structure is given by: 
\begin{equation}
\begin{gathered}
    Q g=\theta^a g P_a\,, 
        \qquad 
    QP_a = \frac{1}{2}\theta^d[P_a, P_d]\,,
\end{gathered}    
\end{equation}
and the presymplectic structure reads as:
\begin{equation}
\begin{gathered}
    \omega = d\chi = (-1)^{n-1}d(\theta^{(n-1)}_a \tr(P^a g^{-1}dg))\,,
\end{gathered}    
\end{equation}
where $\tr$ is the invariant trace. It is easy to check that $Q$ is Hamiltonian and the Hamiltonian of $Q$ is given explicitly by 
\begin{equation}
   H=- \frac{1}{2} \theta^{(n)} \tr(P_a P^a)\,, \qquad   
   i_Q \omega +dH \in \cI\,.
\end{equation}
The corresponding intrinsic action is given by: 
\begin{equation}
\begin{gathered}
    S = \tr \int d^nx \, P^a g^{-1}\partial_a g -\frac{1}{2}P_a P^a\,,
\end{gathered} 
\end{equation}
and is immediately identified as the first-order formulation of PCM. It is easy to see that $P_a=g^{-1}\partial_ag$ on solutions in accord with the identification $TG \simeq G \times \algg$ made above.

\subsubsection{Duality}

To consider the PCM -- FT duality we set $n=4$
and observe that the following vertical vector fields on $E$
\begin{equation}
V_i g = gT_i\,, \qquad V_i P_a= [P_a, T_i]\,,
\end{equation}
where $T_i$ form a basis of $\algg$,
represent the internal symmetries of the presymplectic gPDE underlying PCM. Indeed, it is easy to check that 
\begin{equation}
\begin{gathered}
    i_{V_i} \omega = -dJ_i = -d \, \tr(\theta^{(3)}_a P^a T_i)\,.
\end{gathered} 
\end{equation}
Moreover, the classical anomaly is absent
\begin{equation}
\begin{gathered}
    V_i J_j - f^k{}_{ij} J_k = \tr(\theta^{(3)}_a [P^a, T_i] T_j ) - f^k{}_{ij} \tr(\theta^{(3)}_a P^a T_k ) = 0\,,
\end{gathered} 
\end{equation}
and hence all the conditions of Proposition~\bref{prop:main} are satisfied. Note that $V_i$ are the left-invariant vector fields on $G$, extended to $E$. They satisfy $[V_i,V_j]=f^k{}_{ij}V_k$ and generate the right action $g\mapsto gh$ of $G$, under which $P_a$ transforms in the adjoint representation, and $J_i$ are the corresponding Noether currents, built from $j_a$, cf.~\eqref{lr_currents}. 

The result of gauging is the presymplectic gPDE $E' = E\times \algg[1] \times \algg^*[2]$ with extra coordinates $c, \ \gh{c} = 1$ and $b, \ \gh{b} = 2$ valued in $\algg$ (with $\algg^*$ identified with $\algg$ via $\tr$) and parametrizing $\algg[1] \times \algg^*[2]$. The extended presymplectic structure $\omega'$ and $Q'$ are given by:
\begin{equation}
\begin{gathered}
    \omega' = d\chi'\,, \qquad \chi'=-\tr\big(\theta^{(3)}_a P^a g^{-1}dg\big) + \tr(b\,dc)\,,
\end{gathered} 
\end{equation}
and
\begin{equation}
\begin{gathered}
    Q'c = -\frac{1}{2}[c,c]\,, \\
    Q' b= \theta^{(3)}_a P^a - [c,b]\,,
\end{gathered} 
\end{equation}
while the action of $Q'$ on the $E$ part is given by $Q+c^iV_i$, where $c=c^iT_i$, e.g.
\begin{equation}
Q'g=g\theta^a P_a + gc \,, \qquad 
    Q'P_a = \frac{1}{2} \theta^d[P_a, P_d] + [P_a, c]\,.
\end{equation}
The associated intrinsic action reads as:
\begin{multline}
    S = \tr\int d^4x \ -\frac{1}{2}P_a P^a + P^a g^{-1}\partial_a g - A_a P^a + 
    \\
    \frac{1}{4}\epsilon^{abcd} B_{ab}(\partial_c A_d - \partial_d A_c + [A_c, A_d])\,,
\end{multline} 
where $\sigma^*(c)=A_a\theta^a$ and $\sigma^*(b)=\half B_{ab}\theta^a\theta^b$. Eliminating $P_a$ through its equation of motion $P_a=g^{-1}\partial_a g-A_a$ and using the gauge freedom to set $g=1$, at least locally, one recovers the FT action~\eqref{freedman_town_first}.

Just like in the preceding examples, an efficient shortcut to derive FT theory from the above topological deformation is to eliminate contractible pairs at the level of presymplectic gPDE. Namely, $g$ and $c' = g^{-1}Q'g=\theta^a P_a + c$ is a contractible pair, at least locally. Ignoring the global geometry of $G$ the resulting presymplectic gPDE $(E_{\rm red},Q_{\rm red},\omega_{\rm red})$, where $E_{\rm red}\subset E'$ is the subbundle obtained by eliminating this pair, is determined by:
\begin{equation}
\begin{gathered}
    Q_{\rm red}b = \theta^{(3)}_a P^a + \theta^a[P_a,b]\,, \\
    Q_{\rm red}P_a = -\frac{1}{2}\theta^d [P_a, P_d]\,,
    \label{min_freedman_town}
\end{gathered} 
\end{equation}
and 
\begin{equation}
    \omega_{\rm red} = -d\, \tr(\theta^{a} P_a db)\,.
\end{equation}
This coincides with the presymplectic gPDE formulation of the Freedman--Townsend theory put forward in~\cite{Dneprov:2024cvt}.

Finally, let us mention a possible relation to covariant actions for duality-symmetric and chiral (self-dual) fields, where a field and its dual enter a single action. A prominent example is the Pasti--Sorokin--Tonin approach~\cite{Pasti:1995tn,Pasti:1996vs,Pasti:1997gx}, where manifest Lorentz invariance is achieved by introducing an auxiliary scalar field which is pure gauge. Alternative formulations were proposed by Sen~\cite{Sen:2015nph,Sen:2019qit} and further developed in~\cite{Hull:2023dgp,Hull:2026osk}. In these formulations, the chiral field is accompanied by an additional sector that decouples from the physical one, which is reminiscent of the decoupled topological sector produced by the gauging considered here. It would be interesting to understand whether such formulations can be obtained or reinterpreted within the present framework, e.g.\ by gauging suitable symmetries of the presymplectic gPDE describing a pair of dual theories. It is also worth mentioning another approach employing the embedding into a topological theory in higher dimensions~\cite{Arvanitakis:2022bnr,Evnin:2023ypu,Chen:2026chi}.

\section*{Acknowledgments}
MG is grateful to G. Barnich and E. Skvortsov for fruitful discussions. Useful exchanges with S. Lyakhovich are acknowledged. The work of ID and MG is supported by the ULYSSE Incentive Grant for Mobility in Scientific Research [MISU] F.6003.24, F.R.S.-FNRS, Belgium. We used the AI model Claude (Anthropic) to assist with editing the text, unifying terminology and notation, and cross-checking signs and coefficients in several computations. All results were verified by the authors, who take full responsibility for the content of the paper. 

\appendix

\section{Scalar -- $B$-field equivalence in BV terms} \label{app:direc-comp}

Consider the gauged system $(E',Q',\omega')$ of Section~\bref{sec:scalar-2form-duality}. We introduce coordinates on the space $S\Gamma(E')$ of supersections via
\begin{equation}
    \begin{gathered}
        \ev^* \phi = \phi + \phi_{|a}\theta ^a + \dots\,, \\
        \ev^* \phi^a = \phi^a + \phi^a_{|b}\theta^b + \dots\,, \\
        \ev^* c = C + C_{|a} \theta^a + \frac{1}{2}C_{|ab} \theta^a \theta^b + \frac{1}{3!}C_{|abc} \theta^a \theta^b \theta^c + \frac{1}{4!}C_{|abcd} \theta^a \theta^b \theta^c \theta^d\,, \\
        \ev^* b = B + B_{|a} \theta^a + \frac{1}{2}B_{|ab} \theta^a \theta^b + \frac{1}{3!}B_{|abc} \theta^a \theta^b \theta^c + \frac{1}{4!}B_{|abcd} \theta^a \theta^b \theta^c \theta^d\,.
    \end{gathered}
\end{equation}

For the components not in the kernel of the induced presymplectic structure we use the following names:
\begin{equation}
\begin{aligned}
    \phi &= \phi\,, &\qquad \phi^* &= -\phi_{|a}^{a}\,, \\
    \phi^a &= \phi^a\,, &\qquad \phi^*_a &= \phi_{|a}\,, \\
    A_a &= C_{|a}\,, &\qquad A^{*a} &= \frac{1}{3!}\epsilon^{abcd} B_{|bcd}\,, \\
    B_{ab} &= B_{|ab}\,, &\qquad B^{*ab} &= \frac{1}{2}\epsilon^{abcd} C_{|cd}\,, \\
    C &= C\,, &\qquad C^* &= \frac{1}{4!}\epsilon^{abcd}B_{|abcd}\,, \\
    \zeta_a &= B_{|a}\,, &\qquad \zeta^{*a} &= -\frac{1}{3!}\epsilon^{abcd} C_{|bcd}\,, \\
    \xi &= B\,, &\qquad \xi^* &= \frac{1}{4!}\epsilon^{abcd} C_{|abcd}\,,
\end{aligned}
\end{equation}
which induces the following BV symplectic structure:
\begin{multline}
    \Omega = \int_X d^4x \Big( \delta\phi^* \delta\phi + \delta\phi^*_a \delta\phi^a + \delta A^{*a} \delta A_a +\frac{1}{2}\delta B^{*ab}\delta B_{ab} \\
    + \delta C^*\delta C + \delta\zeta^{*a} \delta \zeta_a +\delta \xi^* \delta \xi \Big)\,.
\end{multline}

The associated BV action is given by:
\begin{multline}
    S_{BV} = \int d^4x \Big(\phi^a \partial_a \phi - \frac{1}{2}\phi^a\phi_a + A_a\big(\phi^a + \frac{1}{2}\epsilon^{abcd}\partial_b B_{cd}\big) +\phi^*C - A^{*a} \partial_a C \\ - B^{*ab} \partial_{a} \zeta_{b} + \zeta^{*a} \partial_a \xi\Big)\,.
    \label{B_parenta}
\end{multline}

We can eliminate the group of generalized auxiliary fields $\phi, \phi^*$, $C, C^*$:
\begin{equation}
\begin{gathered}
    \frac{\delta S}{\delta \phi^*} \bigg |_{C^* = \phi = 0} = C = 0\,, \\
    \frac{\delta S}{\delta C}\bigg |_{C^* = \phi = 0} = \phi^* + \partial_a A^{*a} = 0\,,
\end{gathered}
\end{equation}
and $\phi^a, \phi^*_a$:
\begin{equation}
        \frac{\delta S}{\delta \phi^a}\bigg|_{\phi = 0} = A_a - \phi_a = 0\,.
\end{equation}

The resulting BV action reads
\begin{equation}
\begin{gathered}
    S_{BV} = \int d^4x \Big(\frac{1}{2}A^aA_a + \frac{1}{2}\epsilon^{abcd} A_a\partial_b {B}_{cd} - B^{*ab} \partial_{a} \zeta_{b} + \zeta^{*a} \partial_a \xi\Big)\,.
    \label{B_parent}
\end{gathered}
\end{equation}

This is the first-order formulation of the $2$-form gauge theory, where $A^a$ plays the role of the dual field strength. Eliminating $A_a$ and $A^{*a}$, using $A^a=-\frac{1}{3!}\epsilon^{abcd}G_{bcd}$, gives action~\eqref{B_action} for the $B$-field.

\section{Conventions}\label{app:conventions}
Our conventions follow those of~\cite{Alkalaev:2013hta,Grigoriev:2024ncm}, see also~\cite{Ikeda:2012pv}. We collect the relevant formulas below. We use the mostly minus convention for the metric, $\eta_{ab} = diag(+,-,-,-)$. Levi-Civita symbol is defined as follows:
\begin{equation}
    \epsilon_{01\dots n-1} = \epsilon^{01\dots n-1} = 1\,.
\end{equation}
We denote the parity of an object by $| \ |$. Differential forms are bigraded by ghost degree and form degree; $d$ has ghost degree $0$ and form degree $1$. Signs are determined by the total parity, i.e.\ ghost degree plus form degree mod $2$, so that in particular $dz^A$ has parity $|z^A|+1$. The forms will be generally written with the differentials on the left, when not explicitly specified otherwise:
\begin{equation}
    \alpha = dz^{A_1} \cdots dz^{A_k} \alpha_{A_1 \dots A_k}\,,
\end{equation}
and, accordingly,
\begin{equation}
    df = dz^A \frac{\partial f}{\partial z^A}\,.
\end{equation}
The insertion of a vector field into a differential form is given by
\begin{equation}
    i_X = X^A \frac{\partial}{\partial dz^A}\,,
\end{equation}
where $\partial/\partial z^A$ and $\partial/\partial dz^A$ denote left derivatives. The evaluation of a $1$-form on a vector field is given by
\begin{equation}
    \alpha(X) = i_X \alpha\,.
\end{equation}
In particular, $(\sigma^*\chi)(\dx)=i_{\dx}\sigma^*\chi$ in the intrinsic action. The Lie derivative is defined by
\begin{equation}
    L_X = i_X d +(-1)^{|X|}d i_X\,.
\end{equation}

For a graded manifold $M$ with local coordinates $x^A$, $T[k]M$ denotes the shifted tangent bundle, whose fibre coordinates $\xi^A$ have degree $\gh{x^A}+k$. In particular, $T[1]X$ has coordinates $x^a,\theta^a$ with $\gh{\theta^a}=1$, and $\dx=\theta^a\partial/\partial x^a$. Similarly, $T^*[k]M$ denotes the shifted cotangent bundle whose canonical symplectic form has degree $k$: if $x^A$ are local coordinates on $M$, the conjugate fibre coordinates $p_A$ have degree $k-\gh{x^A}$. Similarly, for $W$ a real linear space, $W[p]$ denotes the graded manifold with linear coordinates of degree $p$, so that
$T^*[k](W[p]) \cong W[p]\times W^*[k-p]$.

\providecommand{\href}[2]{#2}\begingroup\raggedright\endgroup

\end{document}